\def\bx{{\mathbf x}}
\def\b0{{\mathbf 0}}
\newcommand{\beq}{\begin{equation}}
\newcommand{\eeq}{\end{equation}}
\def\bA{{\mathbf A}}
\def\ba{\mbox{\boldmath $a$}}
\def\bb{\mbox{\boldmath $b$}}
\def\bd{\mbox{\boldmath $d$}}
\def\bn{\mbox{\boldmath $n$}}
\def\bh{\mbox{\boldmath $h$}}
\def\by{\mbox{\boldmath $y$}}
\def\bx{\mbox{\boldmath $x$}}
\def\bs{\mbox{\boldmath $s$}}
\def\bz{\mbox{\boldmath $z$}}
\def\by{\mbox{\boldmath $y$}}
\def\bA{\mbox{\boldmath $A$}}
\def\mb{\mbox{$\mathbf{b}$}}
\def\mA{\mbox{$\mathbf{A}$}}
\def\mB{\mbox{$\mathbf{B}$}}
\def\mC{\mbox{$\mathbf{C}$}}
\def\mD{\mbox{$\mathbf{D}$}}
\def\mD{\mbox{$\mathbf{D}$}}
\def\mH{\mbox{$\mathbf{H}$}}
\def\mI{\mbox{$\mathbf{I}$}}
\def\mL{\mbox{$\mathbf{L}$}}
\def\mP{\mbox{$\mathbf{P}$}}
\def\mU{\mbox{$\mathbf{U}$}}
\def\mV{\mbox{$\mathbf{V}$}}
\def\mW{\mbox{$\mathbf{W}$}}
\newcommand{\ds}{\displaystyle}
\newtheorem{theorem}{\textbf{Theorem}}
\newtheorem{proposition}{Proposition}
\newtheorem{definition}{Definition}
\newenvironment{proof}[1][Proof]{\noindent \textbf{#1.} }{\qedsymbol}
\newcommand{\qedsymbol}{\hspace{\fill}\rule{1.5ex}{1.5ex}}
\providecommand{\algorithmname}{Algorithm}
\begin{document}

\title{Topological Signal Processing  over\\ Generalized Cell Complexes}

\author{Stefania Sardellitti, \IEEEmembership{Member, IEEE}, Sergio Barbarossa \IEEEmembership{Fellow, IEEE}
\thanks{This work was partially supported  by the H2020 EU/Taiwan Project 5G CONNI, Nr. AMD-861459-3,    by the MIUR under the PRIN Liquid-Edge contract and by the  Huawei Technology France SASU, under agreement N. TC20220919044. The authors are with the Department of Information
Engineering, Electronics, and Telecommunications,
Sapienza University of Rome, Via Eudossiana 18, 00184,
Rome, Italy. E-mails: \{stefania.sardellitti; sergio.barbarossa\}@uniroma1.it. Some preliminary results of this work were presented at the 2021 IEEE  Asilomar Conference \cite{Sar_Bar_Tes}. }}

\maketitle

\begin{abstract}
Topological Signal Processing (TSP) over simplicial complexes is a framework that has been recently proposed, as a generalization of graph signal processing (GSP), to extend GSP to analyzing signals defined over sets of any order (i.e., not only vertices of a graph) and to capture {\it multiway} relations of any order among the data. \textcolor{black}{However, simplicial complexes are required to satisfy the so-called {\it inclusion property}, according to which, if a set belongs to the complex, then all its subsets must also belong to the complex. In some applications, this is a severe limitation. To overcome this limit, in this paper we extend TSP to deal with signals defined over cell complexes and we also generalize the concept of cell complexes to include hollow cells.} 
\textcolor{black}{We show that, even if the algebraic formulation does not change significantly, the extension to the generalized cell complexes considerably broadens the number of applications.  Most important, the new representation provides a much better trade-off between the complexity of the representation and its accuracy.}
In addition, we propose a method to infer the structure of the cell complex from data and we propose distributed filtering strategies, including a method to retrieve the sparsest representation of the harmonic component.
We quantify the advantages of using cell complexes instead of simplicial complexes, in terms of the complexity/accuracy trade-off, \textcolor{black}{for different applications such image segmentation  and  recovering of real flows measured on data traffic and  transportation networks}.
 \end{abstract}

\begin{IEEEkeywords} Topological signal processing, algebraic topology, cell complexes,  simplicial complexes, FIR filters, topology inference.
\end{IEEEkeywords}

\IEEEpeerreviewmaketitle

\section{Introduction}

\IEEEPARstart{I}{n} the last years  a fast-growing interest emerged in signal processing and machine learning communities to develop models and tools for analyzing data defined over  topological spaces  \cite{carlsson2009topology}, 
i.e. over non-metric domains composed of a set of points along with a set of neighborhood relations.
A relevant example is Graph Signal Processing (GSP) \cite{shuman2013}, 
a recent research field arising at the intersection between signal processing and graph theory, which transposes classical signal processing operations, such as, for example,  filtering, sampling, estimation, to signals defined over the vertices of a graph.
Graphs are a simple example of a topological space able to capture {\it pairwise} (or dyadic) relations between data associated with their vertices by encoding these relations through the presence of edges. However, in many complex systems such as gene regulatory networks, biological,  social and brain networks, the richness of the interactions among the constitutive elements cannot be reduced to simple dyadic relationships 
\cite{lambiotte2019networks}.
To overcome the limitations of graph-based representations, it is necessary to go beyond graphs by incorporating {\it  multiway} (or polyadic) relational descriptors.
This generalization is possible resorting to {\it hypergraphs}, i.e. very general structures composed of a set $\mathcal{V}$ of  elements along with an ensemble $\mathcal{S}$ of subsets of  $\mathcal{V}$ of any order \cite{berge1989hypergraphs}. 
However, the very generality of hypergraphs has an inherent complexity that can limit their applicability. To derive numerically efficient techniques able to handle signals defined over higher-order structures, it is useful to endow the domain with a hierarchical structure that enables processing at different levels. Examples of hypergraphs possessing a hierarchical structure, besides graphs, are {\it simplicial complexes} \cite{munkres2018elements} and {\it cell complexes} \cite{hatcher2005algebraic}, 
\cite{grady2010}. Very recently, combinatorial complexes have also been proposed as a very general hierarchical structure \cite{hajijtopological}. Simplicial complexes handle relations of any order, but are prone to respect the so called {\it inclusion property}, stating that if a set belongs to the space, all its subsets must belong to the space as well. In many applications, this constraint represents an unnecessary and unjustified limitation. Conversely, combinatorial complexes do not need to respect the inclusion property, but they lack of an algebraic structure that is useful in deriving signal processing tools. For all these reasons, in this paper we resort to cell complexes, which are quite general, lead to a natural hierarchical structure and can be formally described through an algebraic framework that provides the formal tools to {\it identify global properties, or invariants, of the signal domain by working with local operators}. 

When analyzing data defined over a topological space, two main approaches are possible. The first class of methods is {\it Topological Data Analysis (TDA)} \cite{carlsson2009topology}, where the data are mapped into a topological space and the information is extracted in terms of properties of the space. Persistent homologies represent one of the fundamental tools in this framework \cite{edelsbrunner2008persistent}. 
Methods based on simplicial complexes have been already applied in many fields, such as statistical ranking \cite{jiang2011}, control systems \cite{muhammad2006control}, tumor progression analysis \cite{roman2015simplicial}, 
and brain networks \cite{andjelkovic2020topology}. The two books \cite{robinson2016topological}, \cite{krim2015geometric} focus on topological and geometric methods to analyze signals and images. The second class of methods is {\it Topological Signal Processing (TSP)}, where the observation is composed of a set of signals (data, features) defined over the subsets of a topological space and the information is extracted by processing these signals using {\it tools tailored to the properties of the space} \cite{barb_2020}, \cite{barb_Mag_2020}, \cite{SCHAUB2021}. Our goal in this paper is to extend the  TSP framework developed in \cite{barb_2020} to signals defined over a broad class of cell complexes.
In TSP, the signals are real values (vectors) associated to the elements of the topological space. For example, there are applications to edge signals, i.e. signals associated to the edge of a network, in gene regulatory networks \cite{lee2018topological}, 
neuronal signals \cite{billings2021simplicial} flowing between different areas of the brain, traffic flows in communication and transportation networks \cite{leung1994traffic} and cross-disciplinary citations \cite{Rosvall1118}.
The analysis of edge signals has also been addressed in previous works as \cite{Schaub2020}, \cite{Segarra_019},\cite{Yang2021}.
In \cite{Schaub2020} the authors introduce a class of filters based on the edge-Laplacian, while in
\cite{Segarra_019}  a semi-supervised
learning method for divergence-free and curl-free edge flows was proposed. In \cite{Yang2021} FIR filters for signals defined over simplices are considered.

A particularly relevant application of TSP is machine learning, and more specifically deep neural networks (DNNs). A message passing  algorithm operating over simplicial complexes was proposed in \cite{bodnar2021weisfeiler}. Examples of DNNs operating over 
signals defined over simplicial complexes 
have been considered in \cite{ebli2020simplicial},\textcolor{black}{
\cite{roddenberry2021principled},\cite{SAN}}. 
An introduction to the processing of signals over cell complexes has been  presented in \cite{Schaub2021CC}, where it was shown that convolutional filters can serve as building blocks in neural networks defined over cell complexes. \textcolor{black}{Recently, in \cite{hajij2022higher}, \cite{Giusti_att} novel attention-based neural networks have been defined over cell complexes. At a more fundamental level, one of the key unresolved issues in DNN is explainability. Some very interesting recent attempts have been proposed in \cite{balestriero2019geometry}, where the authors show that the layers of a large class of DNN can be interpreted as max-affine spline operators (MASOs) that partition the input space and then apply affine mappings to their input to produce the output. This  partition of the input space  can be interpreted as a cell complex. In such an application, the topological approach can be useful to identify invariants of the space that can help in extracting relevant information from the input data.}


\textcolor{black}{Typically, when embedding a cell complex into a real domain, the cells are assumed to be homeomorphic\footnote{A homeomorphism is a mapping  between spaces that preserves all their topological properties. Two topological spaces are homeomorphic if there is a continuous and invertible mapping between them.} to Euclidean balls. 
In this paper, besides extending TSP to conventional cell complexes, we will also introduce a new a generalized class of complexes composed of disjoint polygons that may contain holes.
A motivating example for introducing this new class of complexes is reported in Fig. \ref{fig:Image_monna}. 
\begin{figure}[t!]
\centering
\includegraphics[height=5.8cm,width=5.6cm]{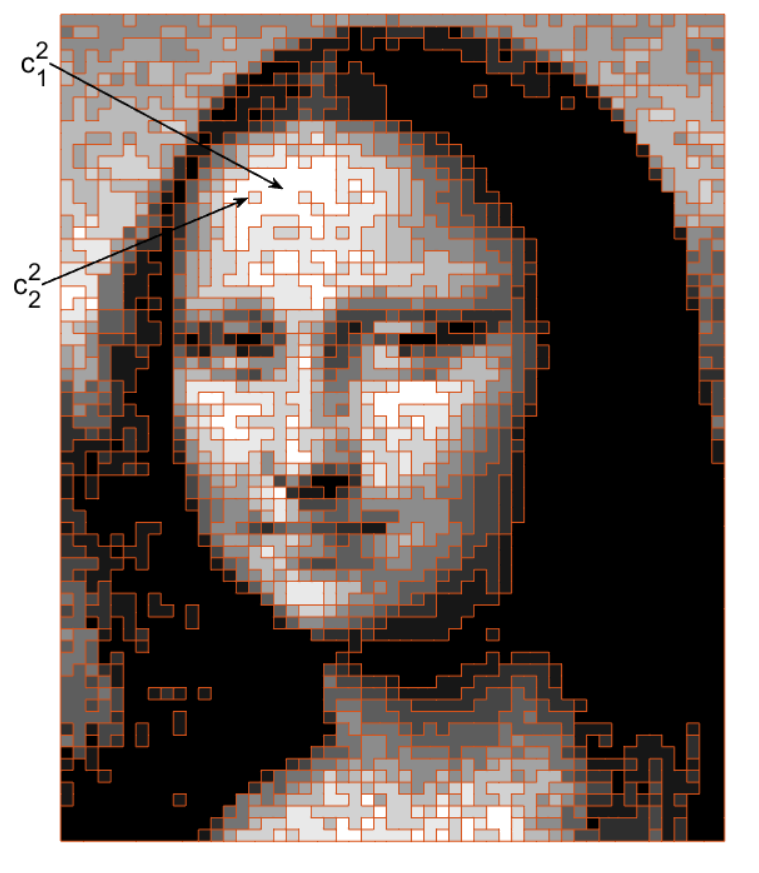}
\caption{Cell complex induced by an image-dependent space partition.}
\label{fig:Image_monna}
\end{figure}
This image has been obtained as follows. We start from a gray-scale image, defined over a rectangular grid within a rectangle ${\cal S}$. Each pixel is located in a position $(x_i, y_i)$ and is the center of a square within which the image assumes a constant value $f(x_i, y_i)$. The ensemble of all the squares represents a partition of the overall domain  ${\cal S}$. In this way, we can think of the image as a 2D (piecewise) function $f(x, y)$, defined over the continuous domain ${\cal S}$. Then, we quantize the image by selecting a discrete number of $H+1$ levels $f_0, f_1, \ldots, f_H$, and we identify the following domains:
\beq \label{eq:domain_i}
{\cal D}_i\equiv \{(x, y): f_i< f(x, y) \le f_{i+1}\}, \, i=0, \ldots, H-1.\eeq 
It is easy to check that the ensemble of all these domains represents a partition of ${\cal S}$. By construction, each domain ${\cal D}_i$ is composed of polygons, possibly disconnected and having holes in their interior. These polygons are examples of generalized cells that, differently from the conventional case, are not homeomorphic to a ball in 2D domain. In computational topology, some algorithms have been developed to find a 2D/3D space partition using a collection of geometric objects that can possibly contain holes, see e.g. \cite{paoluzzi2020topological}, \cite{huhnt2022modeling}, but no TSP tools have been introduced to study signals defined over these domains.} 
This paper is an extension of our preliminary work \cite{Sar_Bar_Tes}, and it contains the following contributions:
\begin{itemize}
  \item[a)] \textcolor{black}{we introduce a new topological structure that generalizes cell complexes to include polygons with holes; the structure is especially suitable in image processing;}
    \item[b)] we show how the proposed complex yields a better rate/distortion trade-off with respect to graphs or simplicial complexes;
    \item[c)] we propose a new method to extract a compact representations of the invariants characterizing the signal domain;
    \item[d)] we propose a method to infer the cell complex structure from data;
    \item[e)] we propose a distributed filtering of signals defined over cell complexes that extends the approaches proposed in \cite{tahbaz2010distributed}, \cite{Yang2021}.
    
\end{itemize}
The paper is organized as follows. In Section II we recall the basic topological tools, distinguishing between  abstract and geometric cell complexes; we also propose a generalization of cell complexes to the case where the (2D) cells may contain holes. In Section III we introduce their algebraic representation, while in Section IV we show how to represent signals defined over cell complexes. In Section V we propose a method to infer the cell complex structure from data and we show how the cell complex model yields a better sparsity/accuracy trade-off with respect to graph- or simplicial-based methods. In Section VI we show how to extract compact representations of the invariants (homologies) associated to the cell structure. In Section VII, we introduce the  design of FIR filters for the solenoidal and irrotational signal components. Finally, in Section VIII we draw some conclusions.

\section{Introduction to algebraic topological tools}
In this section, we recall the basic tools to  represent cell complexes, distinguishing between abstract and geometric cell complexes.
\vspace{-0.2cm}
\subsection{Abstract cell complexes}
\noindent
{As illustrated in  \cite{klette2000cell}, the notion of cell complexes  historically emerged with the rise of topology and is one of the most dynamic notions in the mathematical literature. In this work we focus on complexes composed of a finite number of cells.
An abstract cell complex (ACC) is a finite partially ordered set, or {\it poset}, 
equipped with a dimension function (graded poset \cite{schroder2003ordered}), whose fundamental elements are called cells. A more formal definition of an ACC is the following \cite{klette2000cell}
:}
\begin{definition}
\textit{An abstract cell complex $\mathcal{C}=\{\mathcal{S},  \prec_b, \text{dim}\}$ is a set $\mathcal{S}$ of abstract elements (cells)
provided with a binary relation $\prec_b$, called the bounding (or incidence) relation, and with a dimension function, denoted by $\text{dim}(x)$, that assigns to  each $x \in \mathcal{S}$ a non-negative integer $[x]$, satisfying the following two axioms:
\begin{enumerate}
    \item[A1)] if $x \prec_b y$ and $ y \prec_b z$, then $ x \prec_b z$  follows (transitivity);
    \item[A2)] if $ x \prec_b y$, then $\text{dim}(x)<\text{dim}(y)$ (monotonicity).
\end{enumerate}}
\end{definition}
If $\text{dim}(x)=n$, then $n$ is the dimension (or order) of $x$ and $x$ is  called an $n$-cell; $0$-cells are named vertices.
We denote a cell $x$ of order $n$  with $x^n$.
In words, if $x, y \in \mathcal{S}$ and $x \prec_b y$, we say that $x$ bounds $y$. 
If $x \prec_b y$ and $\text{dim}(x)=\text{dim}(y)-1$ then $x$ is a face of $y$ and $y$ is a co-face of $x$. 
We assume that every $1$-cell of $\mathcal{C}$ is incident with two $0$-cells of $\mathcal{C}$ and distinct $1$-cells are not incident to the same pair of $0$-cells.
Given an $n$-dimensional cell $x^n$, we define its boundary $\partial x^n$ as the set of all cells of dimension less than $n$ that bound $x^n$. An ACC is a $k$-dimensional ACC, or simply a $k$-complex, if the dimensions of all its cells are less than or equal to $k$.
We denote by $\bar{x}^n=x^n \cup \partial x^n$ the closed cell including its boundary.
Two cells $x,y$ are incident if $x \prec_b y$ or $y \prec_b x$. \\
To associate a {\it topology} with an ACC, we need to define neighboring relations among the cells of the complex \cite{Alexandroff}, \cite{barmak2011algebraic}. This can be done by introducing the concept of {\it open} and {\it closed} subsets  \cite{klette2000cell}. 
A subset $U$ of $\mathcal{C}=\{\mathcal{S},\prec_b,\text{dim}\}$ is called {\it open} (or upper set) of $\mathcal{C}$ iff, for every element $x$ of $U$, all elements $y$ of $\mathcal{C}$  for which $ x \prec_b y$ are also in $U$. A subset $U$ is  {\it closed} (or down set) iff, for every element $x$ of $U$, all elements $y$ of $\mathcal{C}$ for which $ y \prec_b x$ are also in $U$.\\
An ACC equipped with the above neighboring relations is a topological space \cite{barmak2011algebraic}.
In signal processing, it is also useful to associate an algebraic structure with the topological space. To do that, we need to introduce
an algebraic operator, called the {\it boundary operator}, which maps each cell to its boundary. 
This mapping is made possible if we assume that  all cells belonging to the complex are \textit{closed}: A cell $x$ of order $n$ is closed if   all its bounding faces belong to the complex. The boundary $\partial x$ is the sequence of all cells of order $n-1$ that bound $x$.
The boundary operator identifies each cell uniquely and makes possible the association of an algebraic representation with an abstract cell complex.\\   
As an example, let us consider a simple ACC composed by $4$  cells of order $0$, $4$  cells of order $1$ and $1$ cell of order $2$. Denoting  the $i$-th cell of order $k$ as $c_i^{k}$,  the set $\mathcal{S}$ of abstract elements is defined by 
$\mathcal{S}=\{ \{c_i^{0}\}_{i=1}^{4}, \{c_i^{1}\}_{i=1}^{4}, c_1^{2}\}$. The neighboring relations among the elements of  the set $\mathcal{S}$ are given by the four open sets $U_1=\{ c_1^{0},c_1^{1},c_4^{1},c_1^{2}\}$, $U_2=\{ c_2^{0},c_4^{1},c_3^{1},c_1^{2}\}$, $U_3=\{ c_3^{0},c_1^{1},c_2^{1},c_1^{2}\}$, $U_4=\{ c_4^{0},c_2^{1},c_3^{1},c_1^{2}\}$.  These open sets  fully describe the cell complex $\mathcal{C}$. Note that  the boundary $\partial c_1^{2}$ of the second order cell $c_1^{2}$ is defined by the closed sequence of 
incident edges  $[c_1^{1},c_2^{1},c_3^{1},c_4^{1}]$, where all edges belong to the complex. Since the cell $c_1^{2}$ is closed, it can be uniquely identified through its bounding edges.   
Note that the notion of abstract cell complex differs and it is much more general than that of a simplicial complex because it is not required to respect the inclusion property. 
\subsection{Geometric cell complexes}
Differently from ACCs, a \textit{geometric cell complex} (GCC) is a collection of finite-dimensional cells embedded in a Euclidean space. A geometric cell complex is a topological space built by sequentially attaching cells of increasing dimension along their boundaries.
In the following, we first introduce  regular geometric cell complexes whose cells are homeomorphic to Euclidean balls. Then, we will define more  general cell complexes whose cells are not constrained to be homeomorphic to Euclidean balls but may contain holes.\\
\textbf{Regular geometric cell complexes (RGCCs).}
The formal definition  describing how the cells may be glued together in an RGCC, is the following \cite{hansen2019toward}:
\begin{definition}
A \textit{regular cell complex} is a topological space $X$ with a partition  $\{X_{\sigma}\}_{\sigma \in P_{X}}$ of subspaces $X_{\sigma}$ of $X$, called cells, \textcolor{black}{where $P_{X}$ is a partially ordered set},  satisfying the following conditions:
\begin{enumerate}
  \item[1)]\textcolor{black}{every {\it open} $k$-order cell $X_\sigma$ is homeomorphic to the interior of a $k$-dimensional Euclidean ball;}
  \item[2)] denoting by $\partial X_\sigma$ the boundary of $X_\sigma$, every {\it closed} $k$-order cell can be written as $\bar{X}_\sigma=X_\sigma \cup \partial X_\sigma$, and it is homeomorphic to a closed ball in a $k$-dimensional Euclidean space; $\partial X_\sigma$ is then homeomorphic to the closure of the ball associated to $X_\sigma$;
    \item[3)] For all $\alpha$, $\sigma$,   $\bar{X}_{\alpha} \cap X_{\sigma} \neq \emptyset$ iff $X_{\sigma}\subseteq \bar{X}_{\alpha}$.
\end{enumerate}
\end{definition}
\textcolor{black}{Since any two open cells are disjoint by definition, condition 3) implies that two closed cells can only intersect over their boundaries.}
Intuitively speaking, a regular cell complex allows us to represent a Euclidean space as a poset, i.e. a hierarchical structure that will turn out to be very useful in the analysis of signals defined over the structure.
Every cell $c_i$ has associated a number, called dimension. The $n$-skeleton ${X}^n$ of order $n$ of ${X}$ is defined as ${X}^n=\underset{ dim(c_i)\leq n}{\bigcup} c_i$. The boundary of a cell $c$ of dimension $n$ is the set of all cells of dimension $n-1$  that bound $c$. As with ACCs, we assume that every $1$-cell of $\mathcal{C}$ is incident with two $0$-cells of $\mathcal{C}$ and distinct $1$-cells are not incident to the same pair of $0$-cells. \\ As an example, a closed $1$-cell (edge) can be built from two $0$-cells (points $c_1^0, c_2^0$) and an interval $I$ ($1$-dimensional ball) by glueing one endpoint of $I$ to $c_1^0$ and the other to $c_2^0$.
A $k$-cell may be represented by an ordered set of vertices comprising a  $k$-polytope\footnote{ A $k$-polytope is a geometric object with flat sides, i.e. with sides consisting of $(k-1)$-polytopes that may have $(k-2)$-polytopes in common.}.
 The dimension or order of a cell complex is the largest dimension of any of its cells. 
A graph is a particular case of a  cell complex of order $1$, containing only cells of order $0$ (nodes) and $1$ (edges).
Simplicial complexes are   special cases of cell complexes
 for which any $k$-cell is composed of exactly $k+1$-vertices.
In Fig. \ref{fig:ACC_2} we show an example of a $3$-dimensional GCC  defined by the set of elements $\mathcal{S}=\{ \{c_i^0\}_{i=1}^{18},  \{c_i^1\}_{i=1}^{25}, \{c_i^2\}_{i=1}^{8} ,c_1^{3}\}$.
The illustrated structure  is not a simplicial complex, because the presence of the nonagon (cell $c_1^2$) and the closed cube (cell $c_1^3$) does not imply the presence of all their subsets. Notice also that the $2$-order closed cell $\bar{c}_1^2$ contains more  $0$-order cells (vertices)  than the $3$-order closed cell $\bar{c}_1^3$. This is a situation that cannot occur in a simplicial complex.\\
 \begin{figure}[t]
\centering
\includegraphics[width=7.1cm,height=4.6cm]{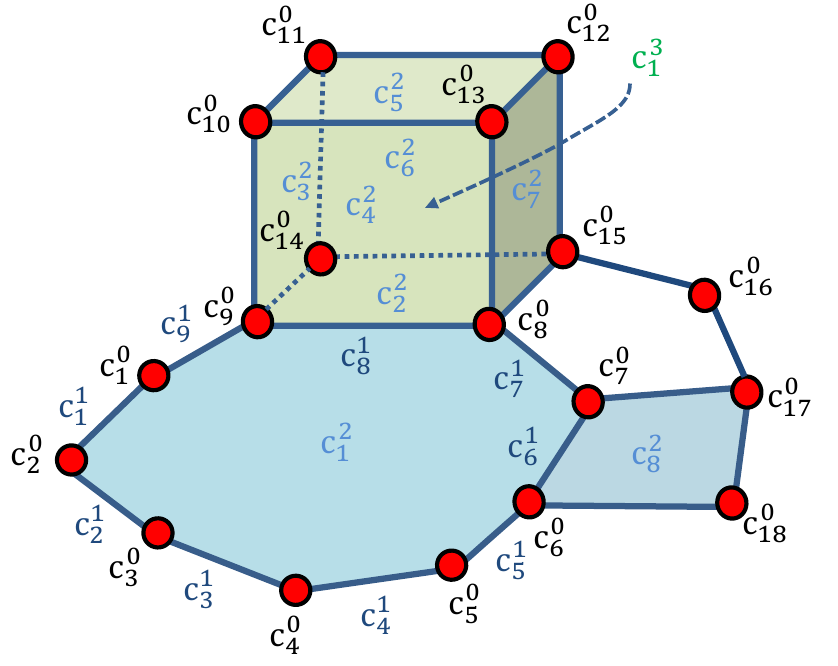}
\caption{A RGCC of order  $3$.}
\label{fig:ACC_2}
\end{figure}
\textcolor{black}{\textbf{Planar hollow cell complexes}.
The previous definition of geometric cell complex does not encompass the complex depicted in Fig. \ref{fig:Image_monna}, where some polygons contain holes. In this section, we generalize the previous definition of complex to be able to handle signals as in Fig.  \ref{fig:Image_monna}. More specifically, we introduce a new class of complexes, that we name  {\it planar hollow cell complex}, whose distinguishing feature is that its cells are not constrained to be homeomorphic to a 2D ball, but can  contain a finite number of holes in their interior.
More specifically, we use the following definition:
\begin{definition} Given a finite region ${\cal D}$ of $\mathbb{R}^{2}$, a Planar Hollow Cell Complex (PHCC) is a partition of  ${\cal D}$ with a finite number of cells  $\{X_{\sigma}\}_{\sigma \in P_{X}}$ satisfying the following conditions:
\begin{enumerate}
         \item[1)] The $1$-skeleton associated with $X$ is a planar graph\footnote{A planar graph is a graph that can be embedded in the plane in a such a way that its edges intersect only at their endpoints.};
 \item[2)] Every $k$-cell $X_{\sigma}$, with $k=0, 1$, is homeomorphic to the open $k$-ball in $\mathbb{R}^k$;
  \item[3)] Every hollow $2$-cell $X_{\sigma}$ is a path-connected region that is homeomorphic to a $2$-ball in $\mathbb{R}^{2}$ that may include a finite number $n_h$ of 2D balls, with $n_h\ge 0$.
\textcolor{black}{\item[4)]
The boundary of a hollow $2$-cell, including $n_h$ holes in its interior, is composed of $n_h+1$
 disconnected  closed paths, each one belonging to a different graph component.
 Every cell is a path-connected space enclosed between the outer and the $n_h$ inner boundaries of the $n_h$ holes; each hole may be filled with other cells or it may be empty.
}
         \item[5)] A closed cell  $\bar{X}_{\alpha}$   is defined as $\bar{X}_{\alpha}=X_{\alpha} \cup \partial X_{\alpha}$. For cells $X_{\alpha}$ of order $k=0,1$ the boundary  $\partial X_{\alpha}$ is homeomorphic to the closure of the ball associated with $X_{\alpha}$. 
         For all $\alpha$, $\sigma$,   $\bar{X}_{\alpha} \cap X_{\sigma} \neq \emptyset$ iff $X_{\sigma}\subseteq \bar{X}_{\alpha}$.   
       \end{enumerate}
\end{definition}
}
\textcolor{black}{As a pictorial example,  Fig. \ref{fig:Cell_part}(a)  shows a PHCC composed of a hollow cell with one hole (the blue cell $c_1^{2}$), filled with  two triangular cells with no holes (the green cells $c_2^{2}$ and $c_3^{2}$).}
\begin{figure}[t]
\centering
\includegraphics[width=5.0cm,height=4.2cm]{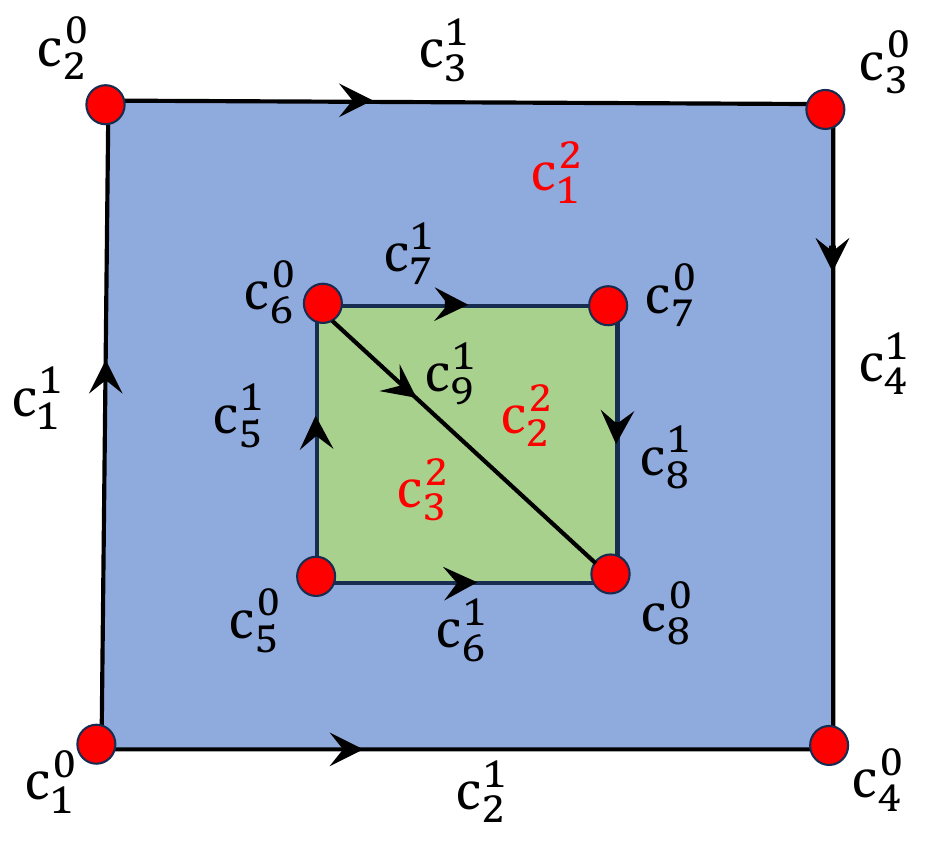}\\
(a)\\
\vspace{0.2cm}
\hspace{0.4cm}\includegraphics[width=5.5cm,height=4.2cm]{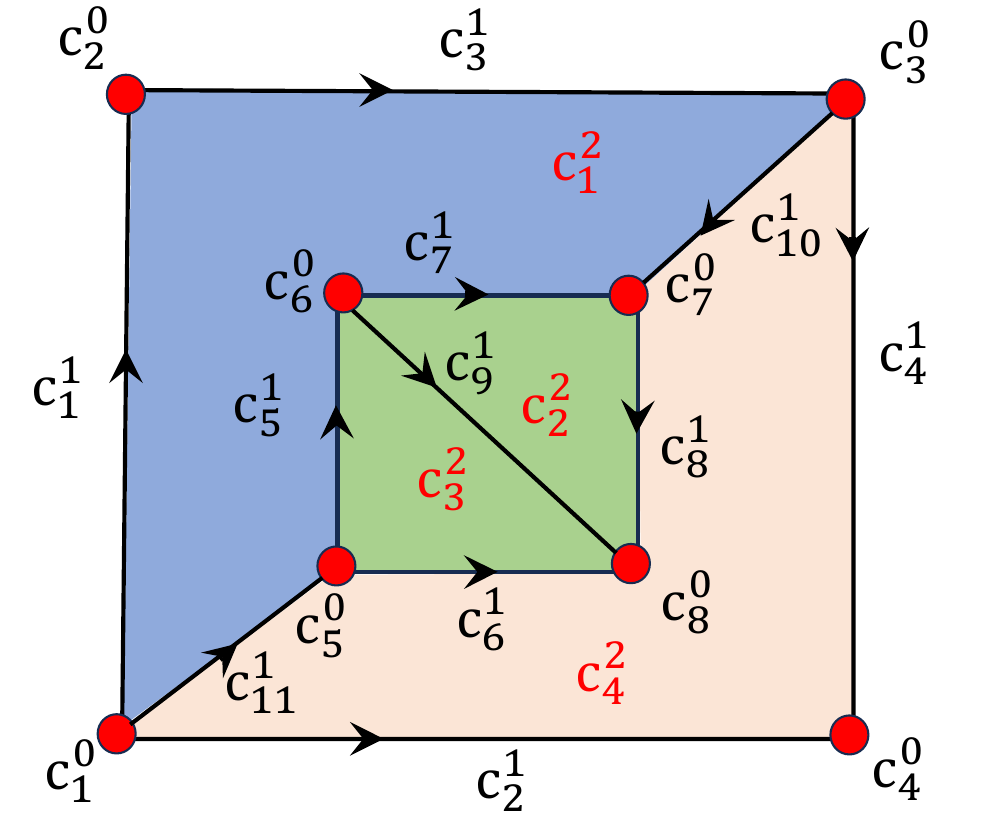}\\
(b)
\caption{(a) A planar hollow cell complex; (b) a planar polygonal cell complex.}
\label{fig:Cell_part}
\end{figure}

\section{Algebraic representation of cell complexes}
The structure of a cell complex is captured by the neighborhood relations among its cells.
As with graphs, it is useful to introduce the orientation of the cells  first. The orientation of a cell can be derived by generalizing the concept of orientation of a simplex.
Every simplex can only  have two orientations, depending on the permutations of its elements. Defining the transposition as the permutation of two elements, two orientations are equivalent if each of them can be recovered from the other through an even number of transpositions \cite{munkres2000topology}.\\
To define the orientation of $k$-cells, we may apply first a simplicial decomposition \cite{grady2010}, which consists in  subdividing the cell into a set of internal $k$-simplices, so that: i) two simplices share exactly one $(k-1)$-simplicial boundary element, which is not the boundary of any other $k$-cell in the complex; and ii) two $k$-simplices induce an opposite orientation on the shared $(k-1)$-boundary. Hence, by orienting a single internal simplex, the orientation propagates on the entire cell.\\ 
\textbf{Abstract Cell complexes and GCCs}. 
An oriented $k$-cell of an ACC or a GCC may be represented through its bounding cells as $c^k=[c^{k-1}_1,\ldots,c^{k-1}_M]$  where two consecutive $(k-1)$-cells, $c^{k-1}_{i}$ and $c^{k-1}_{i+1}$ share a common $(k-2)$-cell boundary.
We use the notation $c^{k-1}_i \sim c^{k}_j$ to indicate that the orientation of $c^{k-1}_i$ is coherent with that of  $c^{k}_j$ and $c^{k-1}_i \nsim c^{k}_j$ to indicate that their orientations are opposite.
Given an orientation, there are two ways in which two cells can be considered to be adjacent: lower and upper adjacent.
Two $k$-order cells are lower adjacent if they share a common face of order $k-1$ and upper adjacent if they are both faces of a cell of order $k+1$.
Given an orientation of the cell complex ${\cal C}$, the structure of a cell complex of order $K$ is fully captured by the set of its incidences matrices $\mB_k$ with $k=1,\ldots,K$, also named boundaries matrices, whose entries  establish which $k$-cells are incident to which $(k-1)$-cells and are defined as follows:
  \beq \label{inc_coeff}
  B_k(i,j)=\left\{\begin{array}{rll}
  0, & \text{if} \; c^{k-1}_i \not\prec_b c^{k}_j \\
  1,& \text{if} \; c^{k-1}_i \prec_b c^{k}_j \;  \text{and} \; c^{k-1}_i \sim c^{k}_j\\
  -1,& \text{if} \; c^{k-1}_i \prec_b c^{k}_j \;  \text{and} \; c^{k-1}_i \nsim c^{k}_j\\
  \end{array}\right. .
  \eeq

Let us consider a cell complex of order two  $\mathcal{C}=\{\mathcal{V},\mathcal{E},\mathcal{P}\}$
where $\mathcal{V}$, $\mathcal{E}$, $\mathcal{P}$ denote the set of  $0$, $1$ and $2$-cells, i.e. vertices, edges and polygons, respectively. We denote their cardinality by $|\mathcal{V}|=N$, $|\mathcal{E}|=E$ and $|\mathcal{P}|=P$.
Then, the two incidence matrices describing the connectivity of the complex are $\mB_1 \in \mathbb{R}^{N\times E}$ and $\mB_2 \in \mathbb{R}^{E\times P}$, where $\mB_2$ can be written as
\beq \label{eq:B2_cell}
\mB_2=[\mB_{T},\mB_{Q},\ldots,\mB_{P_{max}} ]
\eeq
with  $\mB_{T}$, $\mB_{Q}$ and  $\mB_{P_{max}}$ describing the incidences between edges and, respectively, triangles, quadrilateral, up to polygons with  $P_{max}$ sides. To build the incidence matrices
we need to find  the edges  bounding  each  polygon in the complex. Defining a cycle  as a sequence  of distinct and lower adjacent edges, starting from  and ending in the same node, we have to search for all chordless cycles. To do that, we first find all cycles  of increasing length, from $3$ up to a maximum value $P_{max}$, by using a   numerical graph toolbox, for example, the matlab function allcycles. Then, we eliminate every cycle of length $n$  that can be obtained as a linear combination of cycles  of length  smaller than $n$. The remaining cycles identify the boundary edges of each $2$-cell and are then used  to build the columns of the incidence matrices in (\ref{eq:B2_cell}).\\
An important property is that the boundary of a boundary is zero, i.e. it always holds $\mB_k \mB_{k+1}=\mathbf{0}$.
To describe the structure of the $K$-cell complex we can consider the higher order combinatorial Laplacian matrices 
\cite{Horak13} 
given by
\beq
\begin{split}
& \mL_0=\mB_1\mB_1^T,\\
&\mL_k=\mB_k^T\mB_k+\mB_{k+1}\mB_{k+1}^T \; \; \mbox{for} \; k=1,\ldots,K-1\\
&\mL_K=\mB_K^T\mB_K
\end{split}
\eeq
where $\mL_{k,d}=\mB_k^T \mB_k$ and $\mL_{k,u}=\mB_{k+1} \mB_{k+1}^T$ are  the lower and upper Laplacians, expressing the lower and upper adjacency of the $k$-order cells, respectively.
Note that $\mL_0$ is the graph Laplacian.
There are some interesting properties of  the eigenvectors of higher order Laplacians  \cite{Horak13}, 
that it is  useful to recall for our ensuing spectral analysis. Considering w.l.o.g. the first order Laplacian, i.e.
\beq
\label{L1up+L1down}
\mL_1=\mL_{1,d}+\mL_{1,u}=\mB_1^T \mB_1+\mB_2 \mB_2^T,
\eeq
it holds: i) the eigenvectors associated  with  the  nonzero  eigenvalues  of  $\mL_{1,d}=\mathbf{B}^T_1\mathbf{B}_1$  are  orthogonal  to  the eigenvectors associated with the nonzero eigenvalues of $\mL_{1,u}=\mathbf{B}_{2}\mathbf{B}_{2}^T$ and viceversa; ii)
the eigenvectors  associated with the nonzero eigenvalues $\lambda^{1}_i$ of  $\mathbf{L}_1$ are either the eigenvectors of $\mL_{1,d}$ or those of $\mL_{1,u}$; and, finally, iii)
the  nonzero eigenvalues of $\mathbf{L}_1$ are either the eigenvalues of $\mL_{1,d}$ or those of $\mL_{1,u}$.
This spectral structure of $\mL_1$ induces an interesting decomposition of the whole space $\mathbb{R}^{E}$, the so-called Hodge decomposition \cite{Lim}, given by
\beq \label{eq:Hodge_dec}
\mathbb{R}^{E} \triangleq \text{img}(\mB_1^T) \oplus \text{ker}(\mL_1)\oplus \text{img}(\mB_2)
\eeq
where the vectors in $\text{ker}(\mL_1)$ are also in
$\text{ker}(\mB_1)$ and $\text{ker}(\mB_2^T)$.\\

\textcolor{black}
{\textbf{Planar  Hollow Cell Complexes.}
In this section,
we generalize the standard cell complex approach to incorporate cells (polygons) containing one or more holes in their interior. What distinguishes hollow cells from the others is that they have one outer boundary and one or more inner boundaries, one for each hole. These boundaries are, by construction, cycles  that are disconnected from each other. In this section, we show that also working with hollow cells, we can still  extract the fundamental invariant properties of the space, such as the number of connected components or number of holes, from the algebraic representation given by the incidence matrices. 
We start with the illustrative example of the $2$-order complex depicted in Fig. \ref{fig:Cell_part}(a), and then we will provide the full generalization. The complex in Fig.  \ref{fig:Cell_part}(a) has an outer square cell with one inner hole, containing two triangular cells without holes. In this case, the node-edge and edge-polygon incidence matrices are, respectively,  $$\mB_1=\left[\begin{array}{ccccccccc} 
-1 & -1 & 0 & 0 & 0 & 0 &  0& 0 & 0\\  
 1 & 0 & -1 & 0 & 0 & 0 &  0& 0 & 0\\
 0 & 0 & 1 & -1 & 0 & 0 &  0 & 0 & 0\\ 
 0 & 1 & 0 & 1 & 0 & 0 &   0 & 0 & 0\\ 
 0 & 0 & 0 & 0 & -1 & -1 & 0& 0 & 0\\ 
 0 & 0 & 0 & 0 & 1 & 0 &   -1 & 0 & -1\\ 
 0 & 0 & 0 & 0 & 0 & 0 &  1 & -1 & 0\\ 
 0 & 0 & 0 & 0 & 0 & 1 &  0 & 1 & 1
\end{array}\right],$$ $$\mB_2=\left[\begin{array}{rrr} 0 & 0 & 1\\  0 & 0 & -1\\
 0 & 0 & 1\\ 0 & 0 & 1\\ 0 & 1 & -1\\ 0 & -1 & 1\\ 1 & 0 & -1\\ 1 & 0 & -1\\ -1 & 1 & 0
\end{array}\right]
$$ where the first two columns of $\mathbf{B}_{2}$
identify the triangular cells $c_2^{2}$ and $c_2^{3}$ without holes, while the third column identifies the cell $c_1^{2}$  through its outer and inner bounding edges. 
Looking at $\mB_1$, we can see that it has a block structure, corresponding to the two disconnected graphs composing the $1$-skeleton of the complex. The kernel of $\mB_1$ has then a dimension $2$, coinciding with the number of disconnected components. 
Furthermore, it is easy to verify that, as with standard cell complexes,  $\mB_1\mB_2=0$.
This implies, using standard algebra, see e.g., \cite{Lim}, that 
$${\rm im}(\mB_1^T \mB_1+\mB_2 \mB_2^T)={\rm im}(\mB_1^T)+{\rm im}(\mB_2).$$
In this example, the dimension of ${\rm im}(\mB_1^T)$ is $8-2=6$, while the dimension of  ${\rm im}(\mB_2)$ is $3$, since $\mB_2$ is clearly full rank. Hence, the dimension of ${\rm im}(\mL_1)$ is $6+3=9$, which means that $\mL_1$ is full rank. This implies that the number of holes, i.e. zero,  is correctly identified with the dimension of the  kernel of $\mL_1$. 
Furthermore, according to Euler's Theorem for planar graphs with $n_c$ disconnected components \cite{gallier2011discrete}, the number of $2$-order faces in $\mathcal{X}$, excluding the unbounded external face,  is given by $F=E-N+n_c$. Therefore, in our example we get $F=3$. 
Moreover, we can check that, if we remove any face from the space, which means that we remove one column from the incidence matrix $\mB_2$, we create a hole in the space and indeed the dimension of 
$\text{ker}(\mL_{1})$ becomes equal to one, i.e. it coincides with the number of holes. In summary, the first two Betti numbers of the space sketched in Fig. \ref{fig:Cell_part}(a) are then $\beta_0=2$ (number of connected components) and $\beta_1=0$ (number of holes). An interesting property to notice about filling this space with hollow cells is that if we had filled the same space with a structure whose $1$-d skeleton was connected, as depicted in Fig. \ref{fig:Cell_part}(b), we would have found $\beta_0=1$ and $\beta_1=0$.}\\
\textcolor{black}{We prove now that the above results are valid in general, for any planar hollow cell complex. 
\begin{theorem}
    Assume that a $2$-order cell complex  $\mathcal{X}=\{\mathcal{V},\mathcal{E},\mathcal{F}\}$  defines   a partition of a   closed space $\mathcal{D}\subseteq \mathbb{R}^2$ through its cells $c_i^{k}$    of order $k=0,1,2$,     with  $\bigcup_{i,k} c_i^{k}=\mathcal{D}$. The $1$-skeleton associated with the complex is composed by the union of $n_c \geq 1$ disconnected     graphs $\mathcal{G}_i=\{\mathcal{V}_i, \mathcal{E}_i\}$. 
    Then, it holds:
    \begin{itemize}
        \item [a)] The cell complex $\mathcal{X}$ admits an algebraic representation through the node-edge and edge-polygon  incidence matrices $\mB_1$ and $\mB_2$ given, respectively,  in (\ref{eq:B1_bl}) and (\ref{eq:B_2_planar}) 
           and the  following orthogonality condition  holds  
        \beq
        \mB_1 \mB_2=\mathbf{0}.
        \eeq
        \item [b)]  The structure of the  planar complex  is described by the first-order  Laplacian matrix
       $\mL_1=\mB_1^T \mB_1+\mB_2 \mB_2^T,$ which implies that  the Hodge decomposition still holds, so that the whole space $\mathbb{R}^E$ is the direct sum of three orthogonal subspaces, the irrotational, solenoidal and harmonic subspaces, as in  (\ref{eq:Hodge_dec}).
       The dimension of $\text{ker}(\mL_1)$ counts the number of  $1$-dimensional holes in the complex, i.e. the regions of the plane that remain uncovered.   
    \end{itemize}
\end{theorem}
\begin{proof} See Appendix A. \end{proof}}\\
\section{Signal processing over cell complexes}
Some fundamental tools to analyze signals defined over simplicial complexes are provided in \cite{grady2010}, \cite{barb_2020}, \cite{barb_Mag_2020}, \cite{SCHAUB2021}. Here we show that the framework can be easily extended to cell complexes. We focus w.l.o.g. on cell complexes of order $2$. Given a cell complex $\mathcal{C}=\{\mathcal{V},\mathcal{E},\mathcal{P}\}$, the signal on vertices, edges and polygons are defined by the following maps: $\bs^0: {\cal V} \rightarrow \mathbb{R}^N$, $\bs^1: {\cal E} \rightarrow \mathbb{R}^E$, and $\bs^2: {\cal P} \rightarrow \mathbb{R}^P$.
Exploiting the Hodge decomposition in (\ref{eq:Hodge_dec}), we may always express a signal $\bs^1$ as
the sum of three orthogonal components \cite{Lim}, i.e. 
\beq \label{eq:s1_dec}
\bs^1=\mB_1^T \bs^0+ \mB_2 \bs^2+\bs^1_H.
\eeq
In analogy to vector calculus terminology, the first component $\bs^1_{irr}:=\mB_1^T \bs^0$ is called the irrotational component since, using the equality $\mB_1 \mB_2=\mathbf{0}$, it has zero curl, i.e. $\mB_2^T \bs^1_{irr}=\mathbf{0}$. The second term $\bs^1_{sol}:= \mB_2 \bs^2$  is the solenoidal component, since its divergence defined as $\mB_1 \bs^1_{sol}$ is zero. Finally, the component $\bs^1_H$ is the harmonic component of the signal since it belongs to $\text{ker}(\mL_1)$, and it is a signal with zero curl and zero divergence.
\textcolor{black}{Note  from (\ref{eq:s1_dec}) that the signals defined over cells of order $k$ depend on the signals defined over cells of order $k-1$ and $k+1$. This interdependence between signals of different orders is what allows us to encode and capture the topological structure of the data.}\\
 A useful orthogonal basis to represent signals of various order, capturing the connectivity properties of the complex, is given by the eigenvectors of the corresponding higher order Laplacian.
 Then, generalizing graph spectral theory \textcolor{black}{\cite{shuman2013}, we can introduce, as for simplicial complex \cite{barb_2020},} a notion of Cell complex Fourier Transform (CFT) for signals defined over cell complexes.
 Let us consider the eigendecomposition
\beq
\mL_k= \mU_k \boldsymbol{\Lambda}_k \mU_k^T
\eeq
where $\mU_k$ is the eigenvectors matrix and $\boldsymbol{\Lambda}_k$ is a diagonal matrix with entries the eigenvalues $\lambda_i^k$ of $\mL_k$, with $i=1,\ldots, E$. Then, we define the $k$-order CFT as the projection of a $k$-order signal onto the eigenvectors of $\mL_k$, i.e.
\beq \label{eq:GFT}
\hat{\bs}^k \triangleq \mU_k^T \bs^k. \eeq
A signal $\bs^k$ can then be represented in terms of its CFT coefficients as $\bs^k =\mU_k \hat{\bs}^k.$
\textcolor{black}{Note that   the eigenvectors of the upper and lower Laplacians  are the optimal orthogonal vectors minimizing the quadratic forms $\mathbf{x}^{1\, T} \mB_1^T \mB_1 \mathbf{x}^1$ and $\mathbf{x}^{1\, T} \mB_2 \mB_2^T \mathbf{x}^1$, respectively. These quadratic forms represent the squared norm of the divergence and the curl vectors of 
$\mathbf{x}^1$, respectively. 
Then,  the first eigenvectors of the lower and upper Laplacians enable capturing the signals smoothness  in terms of divergence and curl variations.} \\
In \cite{barb_2020}, using the signal decomposition in  (\ref{eq:s1_dec}) we analyzed optimization strategies for the estimation of the signals $\bs^0$, $\bs^2$ and $\bs_H^{1}$ from noisy observations of edge signals defined over simplicial complexes and  we provided conditions for their sampling and reconstruction. Those results can be extended to signals defined over cell complexes because the structure of $\mB_2$ in (\ref{eq:B2_cell}) 
 and  (\ref{eq:B_2_planar}) 
has the same properties, thus generalizing the approach of  \cite{barb_2020}.\\
\textcolor{black}{Let us now investigate how the use of cell complexes yields a substantial performance  improvement in terms of sparsity/accuracy trade-off with respect to simplicial structures.} 
\textcolor{black}{As an illustrative application, we focus on an image coding task by showing how planar cell complexes are capable of partitioning the image domain in cells encoding similarity among the intensity of adjacent pixels. 
We consider, as an example, a gray-scale version of the Mona Lisa painting, shown in Fig. \ref{fig:Image_monna}.  The  image is defined over a rectangular grid by associating  each pixel to a square. 
Then, we first derive a  simplicial representation of the image by   building a 
gradient-based Delaunay
triangulation of its domain  using $6100$ triangles and   $12$  gray levels. Hence, we get a 2D-dimensional function $f(x,y)$ with constant intensity (gray level) over each triangle. 
We denote with $\bs_2(i)$ the signal  associated to  each triangle $i$, i.e.  the image intensity  observed over it. 
Then, 
we build a planar cell 
complex whose cells are identified according to (\ref{eq:domain_i}) where $f_i$, $i=0,\ldots,H$, are the selected gray levels. 
The result, as shown in Fig. \ref{fig:Image_monna}, is a planar hollow cell complex composed of $N_{c}=969$ cells, where many polygons have holes. To each cell we associate a $2$-order signal $s_2^{cc}(k)$, given by the intensity $f_i$ of the signals on the corresponding domain,
 as illustrated 
in Fig. \ref{fig:Image_monna}.
Clearly, this complex is tuned to the image and is then better able to grasp more structured
relationships among the image pixels.
For example, a big part of the hair of Mona Lisa gives rise to a big polygonal cell  associated to  pixels having similar intensity. 
It is also useful to notice that it is quite likely that the adopted space partition yields hollow cells, like for example
the  cell $c_1^{2}$ that has a hole represented by cell $c_2^{2}$, as indicated by the two arrows in Fig. \ref{fig:Image_monna}.\\ 
It is now interesting to compare the two inferred simplicial and planar cell complexes, in terms of image representation. Specifically, using the second order  Laplacian matrices $\mL_2=\mU_{2} \boldsymbol{\Lambda}_{sc}  \mU_{2}^{ T}$ and  $\mL_2^{cc}=\mU_{2}^{cc} \boldsymbol{\Lambda}_{cc} \mU_{2}^{cc \, T}$, associated with the SC and CC complexes, respectively, we project the signals $\bs_2$ and $\bs_2^{cc}$
over the eigenvectors bases $\mU_2$ and $\mU_2^{cc}$, thus obtaining the following simplicial and cell Fourier coefficients:
\beq \label{eq_repr_Four}
\hat{\bs}_2=\mU_2^T \bs_2, \qquad \hat{\bs}_2^{cc}=\mU_2^{cc \, T} \bs_2^{cc}.
\eeq
Then, we  build the matrices (bases)   $\mU_{2,\mathcal{K}}$ and $\mU_{2,\mathcal{K}_{c}}^{cc}$  containing as columns the eigenvectors, respectively, with indexes in the sets $\mathcal{K}$ and $\mathcal{K}_c$, associated with   the highest  $K$ coefficients in (\ref{eq_repr_Four}).  Then, we get the  $K$-sparse signal representations 
$\hat{\bs}_{2,\mathcal{K}}=\mU_{2,\mathcal{K}}^T\bs_{2}$ and  $\hat{\bs}_{2,\mathcal{K}_c}^{cc}=\mU_{2,\mathcal{K}_c}^{cc \, T} \bs_2^{cc}$. Finally, the images are reconstructed as 
$$\bar{\bs}_2=\mU_{2,\mathcal{K}} \hat{\bs}_{2,\mathcal{K}}, \qquad  \bar{\bs}_2^{cc}=\mU_{2,\mathcal{K}_c}^{cc }\hat{\bs}_{2,\mathcal{K}_c}^{cc}.$$
In Fig. \ref{fig:MSE_Spars_monna}, we  plot the normalized mean squared  error in the reconstructed image vs. its sparsity, i.e. the  number $K$ of bases (eigenvectors) used to represent the signal.
\begin{figure}[t!]
\centering
\includegraphics[width=8.5cm,height=5.3cm]{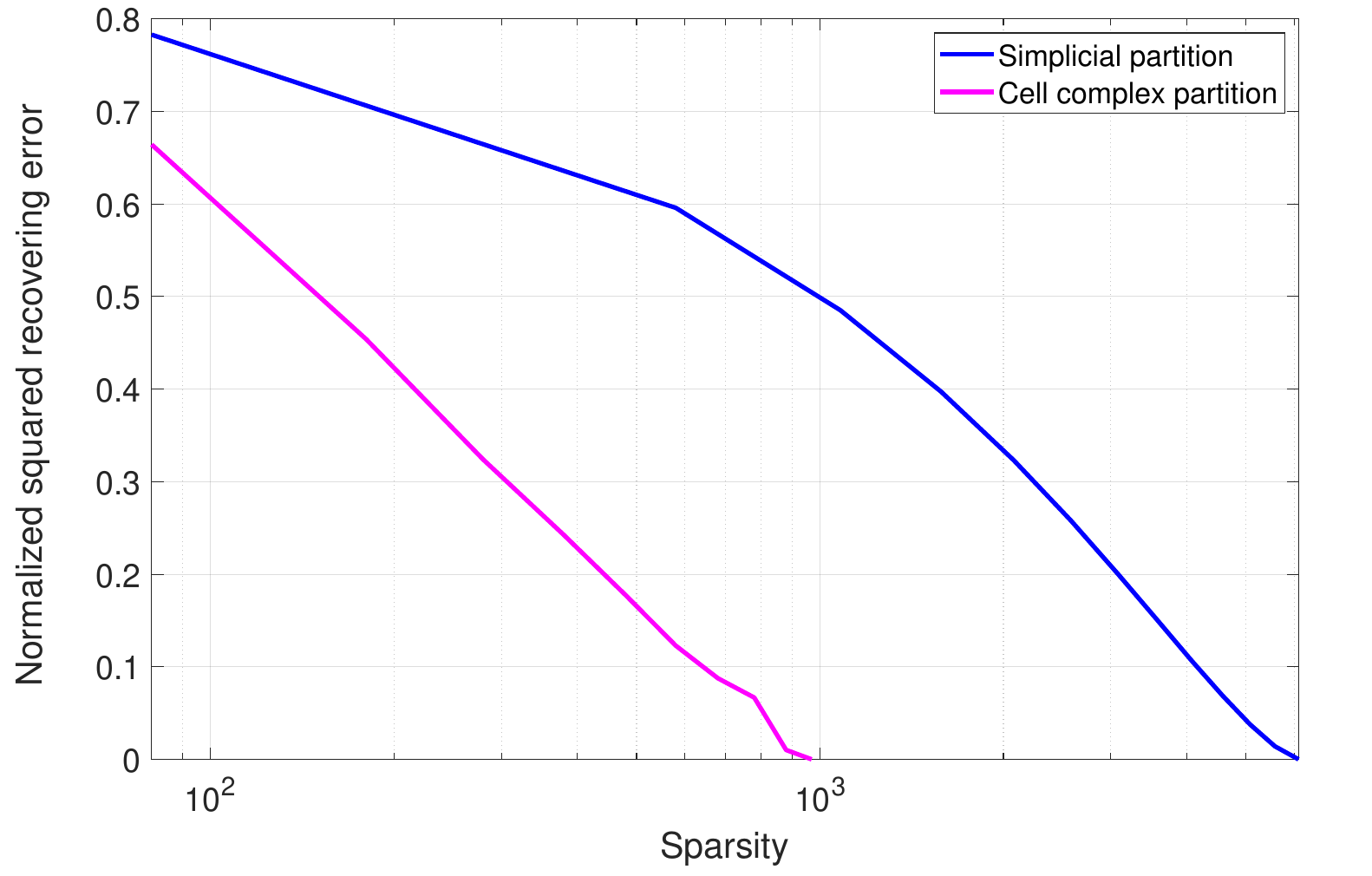}
\caption{Normalized  squared error vs. the signal sparsity.}
\label{fig:MSE_Spars_monna}
\end{figure}
It can be  noticed the considerable  gain in terms of accuracy/sparsity trade-off that a  cell based representation is able to provide with respect to a simplicial complex partition. Intuitively speaking, the advantage comes from a better tuning of the cell complex to the image.}

\section{Topology inference and sparse representation}
\label{Sec: Inference}
The goal of this section is to propose a method to infer the structure of a cell complex from data \textcolor{black}{by generalizing the method proposed in \cite{barb_2020} for the inference of simplicial complexes.} The underlying idea is that the inclusion of polygons of order greater than three yields a richer dictionary that can be exploited to find a better trade-off between sparsity and representation accuracy with respect to the simplicial  case.\\
The general method is inductive and it starts from a zero order set, i.e. an ensemble of points, and infers the structure of a first order set by exploiting the relations between the signals associated to the zero-order elements. This first step has been largely studied as the graph topology inference problem, see, e.g., \cite{mateos2019connecting}, 
\cite{Sar_Barb_Inf}. The successive steps are a generalization of the  first step, every time considering an order up. So, in general, the goal of the $n$-th step is to infer the structure of an $n$-th order cell complex from signals defined over the $(n-1)$-complex. We focus on the inference of the second order complex, but the procedure can be generalized to higher order structures in a straightforward manner.  Hence, in our case we start from an underlying graph, whose incidence matrix $\mB_1$ is then known, and we want to infer the structure of $\mB_2$ from the signals defined over the edges of the graph. We start from the observation of $M$ snapshots of edge signals $\bx^1(m)$, $m=1, \ldots, M$.
As a first step, we  check if it is really necessary to go beyond graphs and then introduce the matrix $\mB_2$ to represent the observed data.
Since the only signal components that may depend on $\mB_2$ are the solenoidal and harmonic components, we project the observed signals $\bx^1(m)$ for $m=1,\ldots,M$ onto the solenoidal and harmonic subspaces by computing
\begin{equation}
\label{proj_on-sh}
\bx^1_{\text{sH}}(m)=\left(\mathbf{I}-\mathbf{U}_{irr}\mathbf{U}_{irr}^T\right) \bx^1(m), \quad m=1,\ldots,M
\end{equation} where $\mU_{irr}$ is the matrix whose columns are the eigenvectors associated to the non-zero eigenvalues of $\mB_1^T \mB_1$.
If the \textcolor{black}{ratio between the overall energy of the $M$ signal components $\bx^1_{\text{sH}}(m)$ and that  of the signals $\bx^1(m)$ is lower than a given threshold}, we set $\mB_2=\mathbf{0}$ and stop, otherwise we proceed to estimate $\mB_2$.
To do that, we build a matrix $\mB_2$ as in \eqref{eq:B2_cell}, identifying all polygons in the graph.   This matrix can be rewritten as
\begin{equation}
\label{B2}    \mB_2\mB_2^T=\sum_{n=1}^{P}q_n \mathbf{b}_n \mathbf{b}_n^T
\end{equation}
where $\mathbf{b}_n$ is the column of $\mB_2$ associated with the $n$-th 2-cell (polygon), whose entries are all zero except the entries associated with the  edges of the $n$-th cell. \textcolor{black}{The coefficient $q_n \in \{0,1\}$ is introduced to select the columns of $\mB_2$}
for which the total circulation of the observed edge signal along  the corresponding $2$-cell is minimum. Note that the circulation of the generic component $\bx_{sH}^1(m)$ along the boundary of the $n$-th polygon is $\mb_n^T\bx_{sH}^1(m)$. If the \textcolor{black}{overall sum of the  norm }  of this circulation is sufficiently small over all set of data, we set $q_n=1$, i.e. we fill the $n$-th polygon, otherwise we set $q_n=0$ and leave the corresponding polygon empty.
Mathematically speaking, the inference problem boils down to solving
the following problem
\beq \label{eq:optm_T}
\begin{array}{lll}
\underset{\mathbf{q} \in \{0,1\}^{P}}{\min}
& f_{\text{TV}}(\mathbf{q}):=\ds\sum_{n=1}^{P} \sum_{m=1}^{M} q_{n}  [\mb_n^T\bx_{sH}^1(m)]^2  \quad (\mathcal{P}_{\text{MTV}}) \medskip\\
\quad \, \! \text{s.t.}
&  \parallel \mathbf{q} \parallel_0 = q^*, \quad  q_n \in \{0,1\}, \forall n,
\end{array}
\eeq
where $q^*$ is the number of (filled) $2$-cells. 
\textcolor{black}{Note that the objective function  $f_{\text{TV}}$  represents  the sum of the squared variations (curls) of the  edge signals  along  the $2$-cells of the complex.
}
Although problem $\mathcal{P}_{\text{MTV}}$ is non-convex, it can be easily solved in closed form, as in  \cite{barb_2020}. In fact, defining the nonnegative coefficients
$d_n=\sum_{m=1}^{M} [ \mb_n^T\bx_{sH}^{1}(m)]^2$, the optimal solution of problem $\mathcal{P}_{\text{MTV}}$ can be obtained by sorting the coefficients $d_n$ in increasing order and selecting the columns of $\mB_2$ corresponding to the indices of the  $q^{\star}$ smallest coefficients $d_n$.
Note that the proposed inference strategy can be extended to planar hollow cell complexes by using the incidence matrix $\mB_2$ in (41).\\
\textcolor{black}{\textbf{Remark.} In problem $\mathcal{P}_{\text{MTV}}$, we need  to select the optimal number $q^{*}$ of filled $2$-cells first. However, given the simplicity of the closed form solution,  we  can compute the solution associated to all  values of $q^{*}$ and then  select the  $q^{*}$ value yielding the best sparsity/accuracy trade-off.}\\
To check the effectiveness of the proposed approach, we compare the method with the corresponding method used in \cite{barb_2020} to infer a simplicial complex from data and with graph-based approaches. The criterion used for the comparison is the trade-off between sparsity of the representation and fitting error. 
To perform the comparison, given each observed flow vector $\bx^1$, we find the sparse vector $\bs^1$ as solution of the following basis pursuit problem
\cite{Donoho98}:
\beq \label{eq:bas_pur}
\begin{array}{lll}
 \underset{\textcolor{black}{{\hat{\bs}}^1} \in \mathbb{R}^E}{\text{min}} & \parallel
\hat{{\bs}}^1\parallel_1   \qquad \qquad \qquad (\mathcal{P}_B)\\
 \; \; \text{s.t.} & \parallel
 {\bx}^1 -\mV \hat{{\bs}}^1\parallel_F \leq \epsilon
 \end{array}
\eeq
where $\mV$ is a dictionary matrix, whose columns are the eigenvectors $\mU_1$ of the Laplacian matrices built using four alternative methods: 1) the cell-based method, where $\mL_1=\mB_1^T \mB_1+\mB_2 \mB_2^T$, with $\mB_2$  built using the cell-based method described above; 2) the simplicial-based method, where $\mL_1=\mB_1^T \mB_1+\mB_2 \mB_2^T$, with $\mB_2$  built using the simplicial-based method proposed in \cite{barb_2020}; 3) the graph-based method, where $\mL_{1,d}=\mB_1^T \mB_1$ incorporates only a purely graph-based representation; 4) a line graph-based method, where the line graph \textcolor{black}{\cite{godsil2001algebraic}} of a graph $\mathcal{G}$ is a graph whose nodes correspond to the edges of $\mathcal{G}$ and its Laplacian matrix is defined as $\mL_{LG}=\text{diag}(\mA_{LG}\mathbf{1})-\mA_{LG}$ with $\mA_{LG}=|\mL_{1,d}-2\mI|$. \textcolor{black}{ By varying  the  positive coefficient $\epsilon$ in (\ref{eq:bas_pur}), we can explore the trade-off between sparsity and accuracy.} \\

 \noindent \textbf{Data Traffic Network.}
We tested the above four methods on the real traffic data measurements on the German National Research and Education Network  operated
by the German DFN-Verein (DFN) \cite{OrlowskiPioroTomaszewskiWessaely2010}.
The
backbone network consists of $50$ nodes, $89$ links and $39$ potential $2$-cells.
The data traffic is aggregated daily over the month of February 2005 and the data  measurements, expressed in Mbit/sec, are observed over the links of the backbone network.
In Fig.  \ref{fig:spara_DFN} we draw the sparsity vs. the representation error $\parallel \bx^1-\mV \bs^1 \parallel_F$ obtained with the above four methods by varying the  coefficient $\epsilon$. We can see from Fig.  \ref{fig:spara_DFN} that the cell-based approaches outperforms all other methods in terms of sparsity/accuracy trade-off.
\begin{figure}[t]
\centering
\includegraphics[width=7.3cm,height=5.0cm]{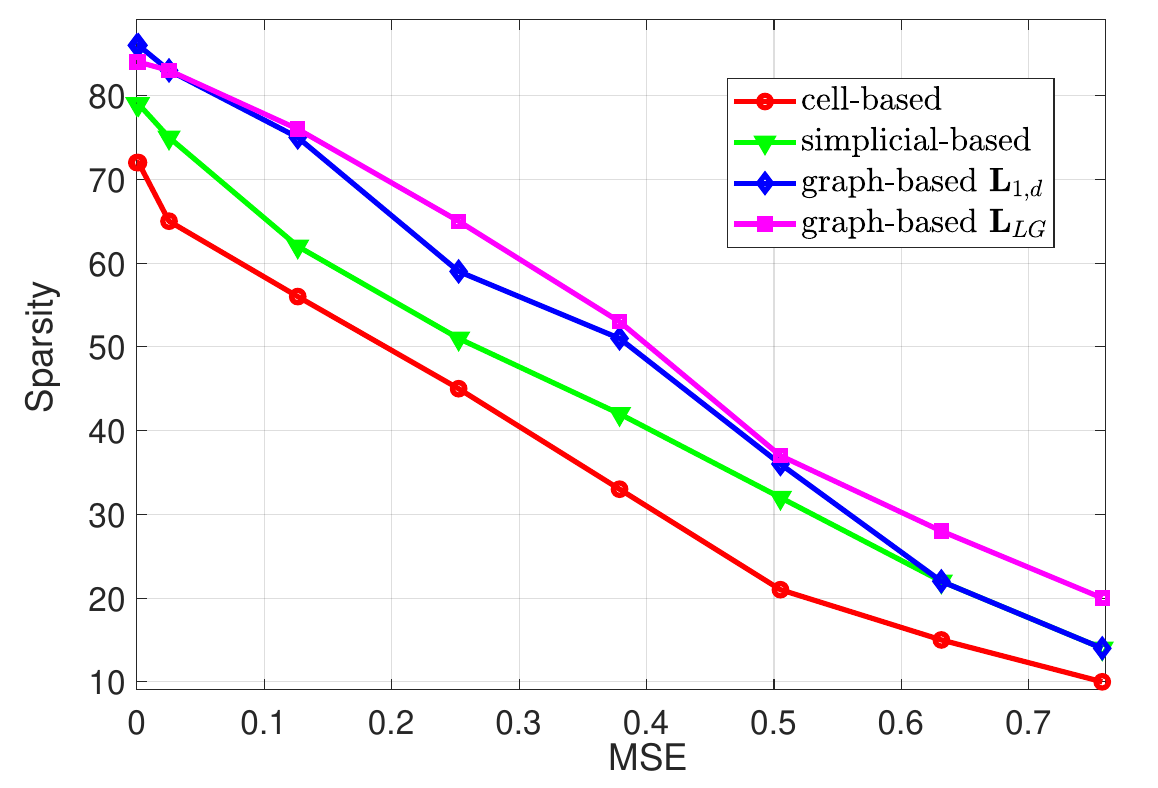}
\caption{Sparsity versus mean squared error.}
\label{fig:spara_DFN}
\end{figure}

 \section{Extraction of  space invariants (homologies)}
\label{sec: Filter_harm}
\textcolor{black}{If embedded in a two-dimensional (2D) real domain $\mathbb{R}^2$, a geometric cell complex typically fills a surface. More generally, in a higher dimension domain $\mathbb{R}^m$, the complex may fill an $m$-th dimensional volume. A basic result of algebraic topology is that some of the fundamental invariants of a cell complex are the number of connected components and the number  of holes (cavities) appearing at increasing dimensions \cite{munkres2000topology}. When analyzing flows along the edges of a complex, it is of particular interest to identify closed paths, or cycles, and check the flow of signals along these cycles. More specifically, looking for invariants, it is useful to distinguish between two kinds of cycles: the cycles, also named boundaries, that enclose a filled surface (or volume) and the  cycles that enclose a hole (or cavity). Detecting cycles surrounding holes is indeed a fundamental step to identify invariants of the structure and identifying invariants is a key step in devising effective geometric deep learning algorithms \cite{bronstein2021geometric}.  Detecting cycles in flow signals finds applications in protein-protein interaction networks \cite{davis2015topology}  to recognize  proteins with similar biological functions or to categorize protein into enzyme and not-enzyme structures
\cite{borgwardt2005protein}. The distinction between cycles that are or not boundaries can also be made using simplicial complexes, but in the latter case, for example, we cannot include filled polygons of dimension greater than three
and in many applications such a restriction may prevent us from finding cycles with minimal length \cite{bodnar2021weisfeilercell}, \cite{horak2009persistent}.
}
The harmonic (edge) signals are (global) circulations that are not representable as a linear combination of curl flows. The analysis of harmonic signals over simplicial complexes has been used to detect global inconsistency in statistical ranking  \cite{jiang2011},  to localize coverage holes in a sensor network represented as a Rips complex  \cite{tahbaz2010distributed}, and  to identify topologically similar trajectories through embedding  into the harmonic subspace \cite{SCHAUB2021}. The harmonic signals can be represented as a linear combination of a basis of eigenvectors spanning the kernel of $\mL_1$. \textcolor{black}{Since there is no unique
way to identify a basis for such a subspace, we look for the sparsest harmonic  cycles, i.e. cycles that do not include boundaries of $2$-order cells, and that  cover a minimum number of edges around holes.}
More specifically, in this section, we propose a distributed optimization strategy to identify holes over cell complexes from noisy measurements by finding the optimal trade-off between data fitting error and sparsity of the edge signals. We derive the sparsest harmonic component by  using an iterative subspace projection optimization method \cite{di2020distributed} converging to the desired  harmonic solution and amenable for  a distributed implementation.
Then, if our goal is detecting holes we have to remove from the harmonic signal $\bs_H$  the curl components that can be expressed as  combinations of  circulations along the boundaries of the $2$-cells.
Assume that the noisy edge  signal $\bx^{1} \in \mathbb{R}^E$  is observed over the edges of a cell complex of order $2$ with 
\beq \bx^{1}=\bs_{H}+\bn\eeq
where  $\bn$ is an additive noise vector.
Our goal is to filter  the harmonic component $\bs_H$ in order to obtain a sparse representation useful to detect holes by using a distributed algorithm. \textcolor{black}{Then, we have to extract from the observed edge signal $\bx^{1}$ the harmonic component while removing  the contributions corresponding to curl flows of $2$-cells,  represented by the signal  $\mathbf{s}_{sol}=\mathbf{B}_2 \mathbf{s}_2$. 
 This entails minimizing both the fitting error and the $l_1$-norm of the vector $\bz=\bs_H+\mB_2 \bs^2$, with the goal of identifying the sparsest edge signal ($1$-cycle) $\bz$ belonging to the kernel of $\mB_1$. This signal  is homologous to $\bs_H$ and tightly encircles the holes of the complex.}
Hence, we find the optimal filtered harmonic signal $\bs_H$ and the second-order cell signal $\bs^2$ solving the following optimization problem
\beq
\begin{array}{lll}
&\underset{\bs_H, \bs^2}{\min} \quad & \parallel \bx^{1}-\bs_H \parallel^2 + \, \gamma \parallel \bs_H+\mB_2 \bs^2 \parallel_1  \qquad (\mathcal{Q}) \\
&\text{s.t.} & \bs_H= \mW_h \bs_H\\
\end{array}
\eeq
where $\mW_h=\mI-\epsilon \mL_1$ and  $\gamma$ is a non-negative coefficient controlling
the trade-off between fitting error and sparsity of the recovered edge signal $\bz=\bs_H+\mB_2 \bs^2$. The constraint $\bs_H=\mW_h \bs_H$
forces $\bs_H$ to belong to the harmonic subspace \cite{di2020distributed}. To
ensure convergence to the harmonic solution of the ensuing iterative solution of the above problem,  we choose  $0<\epsilon<2/\lambda_{max}(\mL_1)$
where $\lambda_{max}(\mL_1)$ is the maximum eigenvalue of $\mL_1$.
A distributed solution of problem $\mathcal{Q}$ can be obtained proceeding as in  
\cite{di2020distributed}, by converting the constrained problem $\mathcal{Q}$ into a sequence of unconstrained problems that, at each step $k$, are formulated as follows: \vspace{-0.01cm}
\beq \label{eq:probl_uncon}
\begin{split}
\underset{\bs_H, \bs^2}{\min} \; \; f_h(\bs_H,\bs^2)&:=\parallel \bx^{1}-\bs_H \parallel ^2 + \,\gamma \parallel \bs_H+\mB_2 \bs^2 \parallel_1 \\ &+ \frac{1}{2 \mu[k]}  \bs_H^T(\mI-\mW_h) \bs_H
\end{split}
\eeq
where $\{\mu[k]\}_k$ is a positive non-increasing sequence forcing  the signal $\bs_H$ to belong to $\text{ker}(\mI-\mW_h)$ as $k \rightarrow \infty$.
\textcolor{black}{Since we chose  $\epsilon>0$ such that $\mI-\mW_h=\epsilon \mL_1$, from the positive semidefiniteness  of $\mL_1$, it follows that
the matrix $\mI-\mW_h$ is  positive semidefinite.} 
\begin{algorithm}[t!]
\small
      {Data: $\bx^1$}.  {Start} {with} {random} {vectors} $\bs_{H}[0] \in \mathbb{R}^{E\times 1}$,
     $\bs^{2}[0]\in \mathbb{R}^{T\times 1}$, $a=1.5e^{-3}$, $k=0$, $\mu[k]=a$. \medskip

    \quad {For} {each} {time} $k > 0$, {repeat}    \medskip

      \quad \quad  1)  $\bs^2[k+1]=\bs^2[k]-\mu[k]\partial_{\bs^2} f_h (\bs_H[k],\bs^2[k])$ \medskip

      \quad \quad 2) $\bs_H[k+1]=\bs_H[k]-\mu[k] \partial_{\bs_H} f_h(\bs_H[k],\bs^2[k])$ \medskip

       \quad \quad 3) $\mu[k+1]=\ds \frac{a}{\sqrt{k+1}}$\medskip

       \quad \quad 4) $k=k+1$
   \quad
   \caption{\small{Distributed Optimization of the Harmonic Filtering}}
 \label{algorithm:Alg_1}
\end{algorithm}
Then, problem (\ref{eq:probl_uncon}) is a convex non-differentiable problem that  can be solved using a (sub)gradient method  where the following recursive rules are updated:
\beq  \label{eq:up_grad}
\begin{split}
& \bs^2[k+1]=\bs^2[k]-\mu[k] \partial_{\bs^2} f_h(\bs_H[k],\bs^2[k])\\
& \bs_H[k+1]=\bs_H[k]-\mu[k] \partial_{\bs_H} f_h(\bs_H[k],\bs^2[k])
\end{split}
\eeq
with
\beq \label{eq:grad}
\begin{split}
&\partial_{\bs^2} f_h(\bs_H[k],\bs^2[k])= \gamma \, \mB_2^T \text{sign}(\bs_H[k]+\mB_2 \bs^2[k])\\
& \partial_{\bs_H} f_h(\bs_H[k],\bs^2[k])\!=\!-2 (\bx^1-\bs_H)\!+\!\gamma \,\text{sign}(\bs_H[k]+\mB_2 \bs^2[k]).
\end{split}
\eeq
Note that the updating rules in (\ref{eq:up_grad}) are amenable for a distributed implementation since the subgradients in
(\ref{eq:grad}) can be derived combining $2$-  or $1$-order signals according to the neighborhood relationships captured by the incidence matrix $\mB_2$. The final distributed algorithm is illustrated in Algorithm $1$ where, to ensure convergence,  we adopt the nonsummable diminishing step size rule $\mu[k]=a/\sqrt{k}$ with $a>0$.\\
\begin{figure}[t!]
\centering
\includegraphics[width=8.0cm,height=4.1cm]{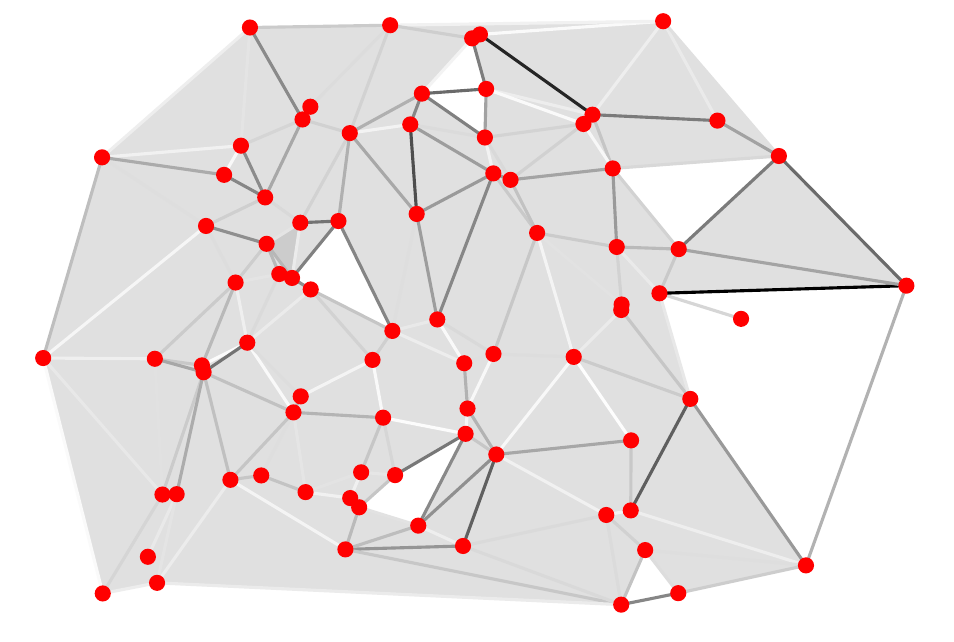}\\
(a) cell complex
\includegraphics[width=5.8cm,height=8.4cm,angle=-90]{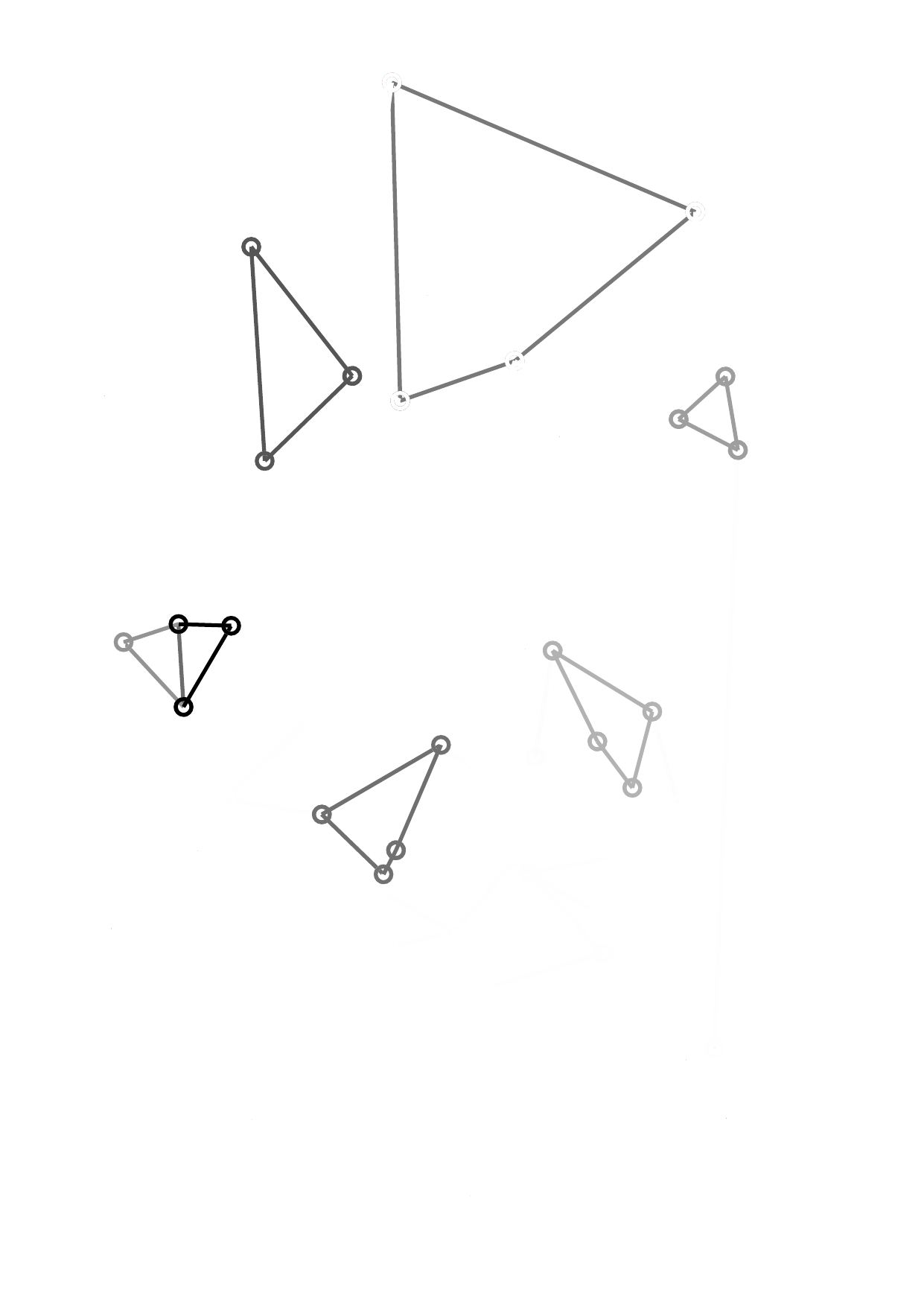}\vspace{-0.2cm}
(b) recovered edge signal
\caption{Filtering harmonic component.}
\label{fig:rec_sign}
\end{figure}
To test the goodness of the proposed harmonic filtering method in identifying minimal-length cycles around holes, we consider the cell complex in Fig. \ref{fig:rec_sign}(a)  composed of $N=80$ nodes, a number of edges $E=188$ and $102$ polyhedral $2$-cells. The covered surface contains $7$ holes \textcolor{black}{so that} the dimension of the harmonic subspace is equal to $7$.
The observed edge signal is  generated as $\bx^{1}=\mU_H \hat{\bs}^1 +\bn$ where $\mU_H$ are the eigenvectors spanning the kernel of $\mL_1$, $\hat{\bs}^1$ is a random vector with entries uniformly distributed between $[0,1]$ and the vector $\bn$  has  zero-mean Gaussian entries with variance $\sigma_n^2=0.05$. The signal values are encoded in the gray color associated with the links in  Fig. \ref{fig:rec_sign}(a).
The sparse edge signal $\bs_H+\mB_2 \bs^2$  recovered by running   Algorithm $1$ is reported in Fig.
\ref{fig:rec_sign}(b), by  setting the trade-off coefficient $\gamma=50$.
We can observe that the algorithm converges to a harmonic component that is non-zero only along the cycles surrounding the holes. \textcolor{black}{This is an example testifying the goodness of the method to identify the (shortest) cycles of the complex.}

\section{Filtering over cell complexes}
The design of filters operating over signals 
defined over the edges of a graph, exploiting simplicial complex topologies, has been considered  recently in \cite{barb_2020}, \cite{ebli2020simplicial}, \cite{Yang2021}. 
\textcolor{black}{A  first introduction to filtering over cell-complexes has been provided in \cite{grady2010}.} 
In \cite{Sar_Bar_Tes} we introduced some preliminary results on finite impulse response (FIR) filters over cell complexes.
\textcolor{black}{A local} filter operating over an edge signal vector $\bs^1$ can be modeled as
\begin{equation} \label{eq:Lin_fil}
    \by^1=\mH\,\bs^1
\end{equation}
where we assume the support of $\mH$  given by the (upper and lower) neighbors of each edge.
\textcolor{black}{
Generalizing the FIR filter design operating over graphs, see e.g. 
\cite{Segarra_17}, to cell complexes, a polynomial FIR filter based on the first order Laplacian  has  the following form
\beq \label{eq: filter_sep}
\mH= \sum_{k=0}^{L} a_k \mL^{k}_{1}= \alpha_0 \mI+\sum_{k=1}^{L} a_k^I \mL^{k}_{1,d}+ \sum_{k=1}^{L} a_k^s \mL^{k}_{1,u}
\eeq
where  $\alpha_0:=a_0^I+a_0^s$
and $L$ is the filter order. In the last equality, we used \eqref{L1up+L1down} and the property that, since $\mB_1 \mB_2=\mathbf{0}$, we can always write $\mL_1^k=\mL_{1,d}^k+\mL_{1,u}^k$.
The above expression shows that filtering edge signals occurs by combining, separately, along lower neighbors (first term) and upper neighbors (second term). 
In (\ref{eq: filter_sep}),} 
we  adopt two sets of coefficients $\{a_k^I\}_{k=1}^{L}$ and $\{a_k^s\}_{k=1}^{L}$ to design the filters operating along the  lower and upper neighbors. This formulation was already proposed in \cite{Yang2021} to design filters operating over simplicial complexes.\\
A possible way to select the two sets of filter coefficients 
can be obtained by imposing two spectral masks on the  irrotational and  solenoidal components.
Applying a CFT to the output of the filter,  we can write the spectrum of the output as
\beq \label{eq:FIR_sch}
\begin{split}
\hat{\by}^1&=\mU_1^T \by^1
=\left(\alpha_0 \mI + \sum_{k=1}^{L} a_k^I \boldsymbol{\Lambda}_{1,d}^k + \sum_{k=1}^{L} a_k^s \boldsymbol{\Lambda}_{1,u}^k\right) \hat{\bs}^1,
\end{split}
\eeq
where $\mU_1$ are the eigenvectors of $\mL_1$ and $\boldsymbol{\Lambda}_{1,d}$, $\boldsymbol{\Lambda}_{1,u}$ are  the diagonal matrices with entries the  eigenvalues $\lambda_d(i)$, $\lambda_u(i)$ of, respectively, the lower and upper Laplacians.
From this expression we can notice that the effect of the filters 
is to impose on the irrotational and solenoidal frequencies the spectral masks
\beq
h^I(\boldsymbol{\Lambda}_{1,d})=\sum_{k=1}^{L} a_k^I \boldsymbol{\Lambda}_{1,d}^k, \quad h^s(\boldsymbol{\Lambda}_{1,u})=\sum_{k=1}^{L} a_k^s \boldsymbol{\Lambda}_{1,u}^k,
\eeq
whose diagonal entries, given by respectively, $h^I(\lambda_d(i))=\sum_{k=1}^{L} a_k^I {\lambda}_{d}(i)^k$ and $h^s(\lambda_u(i))=\sum_{k=1}^{L} a_k^s {\lambda}_{u}(i)^k$ are the frequency responses of the filters. Note that, if the Laplacian $\mL_1$ contains eigenvalues of multiplicity greater than $1$, \textcolor{black}{ all associated to the same subspace, i.e. to the irrotational, solenoidal or harmonic subspace, the spectral filter is unable to distinguish components belonging to the same subspace. 
However, if  a  multiple  eigenvalue is   associated with both  solenoidal and irrotational  eigenvectors, the spectral filter can  distinguish components within either the solenoidal or irrotational subspace, only if the    multiplicity of the eigenvalue within the subspace  is equal to one.} A relevant example of this situation occurs when the second order cell complex covers a surface, but it leaves a number of holes $N_h>0$. In such a case, the multiplicity of the null eigenvalue is exactly $N_h$ \cite{munkres2018elements}. From \eqref{eq:FIR_sch}, we can immediately see that the harmonic component, associated with the null eigenvalue, is simply multiplied by a scalar coefficient $\alpha_0$. Hence, the filter structure in \eqref{eq:FIR_sch} is not adequate to filter {\it within}  the harmonic subspace. To overcome this issue, we can resort to method proposed in the previous section that operates explicitly on the harmonic component. 

\noindent \textbf{Filtering the solenoidal and irrotational components.}\\
In this section,
we will provide theoretical conditions under which the design of
two separate FIR filters based on the lower and upper Laplacians  may offer performance gains with respect to a  single FIR filter handling jointly the solenoidal and irrotational signals.
The solenoidal and irrotational components refer, by construction, to eigenvalues different from zero, then we define the corresponding filters by setting $\alpha_0=0$ in \eqref{eq: filter_sep} and \eqref{eq:FIR_sch}.\\
Given the spectral mask  $h(\lambda_{1,i})$ that we wish to implement over the non-zero eigenvalues $\lambda_{1,i}$ of $\mL_1$, 
the spectral $\mL_1$-based filter   can be written as
 \beq
 \label{eq:Fir1}
 \mH=\overline{\mU}_{1} h(\boldsymbol{\Lambda}_1)\overline{\mU}_1^T
 \eeq
where $\overline{\mU}_1$ contains, in its columns, only the eigenvectors associated to the non-null eigenvalues of $\mL_1$, whereas $h(\boldsymbol{\Lambda}_1)$ is a diagonal matrix with entries $h(\lambda_{1,i})$, $i=1,\ldots,n_t$, and $n_t=\text{rank}(\mL_1)$. 
The filter in (\ref{eq:Fir1}) can be decomposed in the sum of two orthogonal filters associated with the irrotational and  solenoidal subspaces, i.e.
\beq \label{eq:filter_sol_irr}
\mH=\mH_{d}+\mH_{u}=\mU_{d} h^I(\boldsymbol{\Lambda}_{d}) \mU_{d}^T+\mU_{u} h^s(\boldsymbol{\Lambda}_{u}) \mU_{u}^T,
\eeq
where $\boldsymbol{\Lambda}_{d}$ and $\boldsymbol{\Lambda}_{u}$ are the diagonal matrices whose entries are the non-zero eigenvalues of $\mL_{1,d}$ and $\mL_{1,u}$, whereas
$\mU_{d}$ and $\mU_{u}$ are the matrices containing  the associated  eigenvectors. Let us now consider the  two cases where the solenoidal and irrotational  filters are designed using a common set of coefficients or two separate sets.
 \subsubsection{Common filtering}
Let us first approximate the filter $\mH$  using a FIR filter  based on the first order Laplacian given by
\vspace{-0.2cm}
 \beq \widehat{\mH}=\sum_{k=1}^{L} a_k \mL_1^k.
 \eeq
The filter coefficient vector $\ba=[a_1,\ldots,a_L]^T$ can be found as the solution of a least squares problem.
Let us define the matrices
 \beq \label{eq:Phi_I_S}
 \boldsymbol{\Phi}_{I}=[\boldsymbol{\lambda}_I, \boldsymbol{\lambda}_I^2,\ldots, \boldsymbol{\lambda}_I^L], \;  \boldsymbol{\Phi}_{s}=[\boldsymbol{\lambda}_s, \boldsymbol{\lambda}_s^2,\ldots, \boldsymbol{\lambda}_s^L] \eeq with $\boldsymbol{\lambda}_I$ and $\boldsymbol{\lambda}_s$ the column vectors associated with  the diagonal entries of $\boldsymbol{\Lambda}_{d}$ and $\boldsymbol{\Lambda}_{u}$.
Then, using the equalities in (\ref{eq:filter_sol_irr}) and the orthogonality condition $\mathbf{U}_{d}^T \mathbf{U}_{u}=\mathbf{0}$, we get
\vspace{-0.2cm}
\beq \label{eq: sum_conv2}
\begin{split}
\parallel \mathbf{H}-\ds \sum_{k=1}^{L} a_k\, \mL_1^k  \parallel^2_F &= \parallel \mathbf{h}^{s}-\boldsymbol{\Phi}_{s}  \ba \parallel^2_F+ \parallel \mathbf{h}^{I}-\boldsymbol{\Phi}_{I}  \ba \parallel^2_F\\&=\parallel \mathbf{h}-\boldsymbol{\Phi}_{t}  \ba \parallel^2_F
\end{split}
\eeq
where $\mathbf{h}^{s}=\text{diag}(h^s(\boldsymbol{\Lambda}_{u}))$, $\mathbf{h}^{I}=\text{diag}(h^I(\boldsymbol{\Lambda}_{d}))$, $\mathbf{h}=[\mathbf{h}^{s};\mathbf{h}^{I}]$ and $\boldsymbol{\Phi}_{t}=[\boldsymbol{\Phi}_{s};\boldsymbol{\Phi}_{I}]$.
Hence, in the spectral domain the least squares problem we need to solve is 
\beq \label{eq:H_g1}
\underset{\ba \in \mathbb{R}^{L}}{\min} \quad f(\ba):=\parallel \mathbf{h}-\boldsymbol{\Phi}_t {\ba} \parallel^2_F \eeq
whose   closed form solution is given by
$
\ba=\boldsymbol{\Phi}^{\dag}_t \mathbf{h}
$
where $\boldsymbol{\Phi}^{\dag}_t$ is the Moore-Penrose pseudo-inverse of $\boldsymbol{\Phi}_t$.

\subsubsection{Independent filter design}
The FIR filter design illustrated above is the result of a parsimonious choice of the number of filter coefficients, but may be often unable to well approximate different masks for
the solenoidal and irrotational components. To improve the situation, recalling \eqref{eq: filter_sep} and \eqref{eq:FIR_sch}, it is worth to  design the two FIR filters separately, at the slightly increased cost of using two sets of coefficients. More specifically,
given the linear filters $\mH_{u}$ and $\mH_{d}$, our goal now is to find the optimal FIR filters $\hat{\mH}_{u}$, $\hat{\mH}_{d}$ approximating them such that
\beq \label{eq:filter_s_I}
\hat{\mH}_{u}=\ds \sum_{k=1}^{L} a_k^{s} \mL_{1,u}^k, \quad \hat{\mH}_{d}=\ds \sum_{k=1}^{L} a_k^{I} \mL_{1,d}^k.
\eeq
Proceeding exactly as before,
the two filters can be found  by  solving the following least squares problems
 in the spectral domain:
\beq \label{eq:H_sol1}
\underset{\mathbf{a}^{s} \in \mathbb{R}^{L}}{\min} \qquad f_s(\ba^s):=\parallel \mathbf{h}^s-\boldsymbol{\Phi}_{s}  \ba^{s} \parallel^2_F,
\eeq
\beq \label{eq:H_irr1}
\underset{\mathbf{a}^{I} \in \mathbb{R}^{L}}{\min} \qquad f_I(\ba^I):=\parallel \mathbf{h}^{I}-\boldsymbol{\Phi}_{I}  \ba^{I}\parallel^2_F.
\eeq
The optimal coefficients solving  problems (\ref{eq:H_sol1}) and (\ref{eq:H_irr1})
can be derived in closed form.\\
Given the two optimization problems in (\ref{eq:H_sol1}) and (\ref{eq:H_irr1}), let us denote with $L$, $n_d$ and $n_{u}$, respectively, the filter length and the number of distinct non zero eigenvalues of $\mL_d^1$ and $\mL_{u}^1$. Then, as  proved in  Appendix B, it holds:
\begin{itemize}
    \item[a)] the optimal filter coefficients solving  (\ref{eq:H_sol1}), (\ref{eq:H_irr1}) are
    \beq \label{eq:sol_opt}
\ba^{s \star}=\boldsymbol{\Phi}_{s}^{\dag} \mathbf{h}^s, \qquad \ba^{I \star}=\boldsymbol{\Phi}_{I}^{\dag} \mathbf{h}^{I}.
\eeq
\item[b)] if $L\geq n_d$ and $L\geq n_{u}$, then
the pseudoinverse solutions in (\ref{eq:sol_opt})
are such that $f_s(\ba^{s \star})=f_I(\ba^{I \star})=0$. Furthermore, if $L=n_u$ and/or $L=n_{d}$ then the solutions of  (\ref{eq:H_sol1}) and/or (\ref{eq:H_irr1}) are unique.
\end{itemize}
\textcolor{black}{Even though  the independent filters design in general leads  performance gains with respect to a single filter due to the higher degree of freedoms, we indeed observed that there are some particular cases where this improvement does not occur. To better understand when this happens, in the following theorem we state the theoretical conditions under which  the design of independent FIR filters for the solenoidal and irrotational signals  attains better performance with respect to a single $\mL_1$-based  filter.} 
 Let us define
 \beq \label{eq:Phi_d_u}
\begin{split}
& \boldsymbol{\Phi}_{d}\!= \![\text{vec}(\mL_{1,d}^1),..., \text{vec}(\mL_{1,d}^L)], \boldsymbol{\Phi}_{u}\!= \![\text{vec}(\mL_{1,u}^1),..., \text{vec}(\mL_{1,u}^L)].
\end{split}
\eeq
Denote with $\mathcal{R}(\bA)$ the span of the columns of the matrix $\bA$, and with
$\mathbf{P}_{\mathcal{R}(\mathbf{A})}=\mathbf{A} \mathbf{A}^{\dag}$, $\mathbf{P}_{\mathcal{R}(\mathbf{A})^{\perp}}=\mathbf{I}-\mathbf{A} \mathbf{A}^{\dag}$
the projections, respectively,  onto $\mathcal{R}(\bA)$ and  onto its  complementary subspace.

\begin{proposition}
\textit{
Let  $f(\mathbf{a}^{\star})$, $f_s(\mathbf{a}^{s \, \star})$ and $f_I(\mathbf{a}^{I \, \star})$ be the  optimal solutions of the convex optimization problems in (\ref{eq:H_g1}), (\ref{eq:H_sol1}) and (\ref{eq:H_irr1}), respectively. Define the vector $\by=(\mathbf{P}_{\mathcal{R}(\boldsymbol{\Phi}_{d})}+\mathbf{P}_{\mathcal{R}(\boldsymbol{\Phi}_{u})}) \text{vec}(\mH)$. Then, it  holds:\\
a) if the system $\boldsymbol{\Phi}_{t}  \ba =\mathbf{h}$ is inconsistent and the matrices $\boldsymbol{\Phi}_{d}$ and $\boldsymbol{\Phi}_{u}$ are full column rank,  then:
\begin{itemize}
    \item[\text{i)}]  If $\mathbf{P}_{\mathcal{R}(\boldsymbol{\Phi}_{d}+\boldsymbol{\Phi}_{u})^{\perp}} \by=\mathbf{0}$ the disjoint and joint filtering strategies are equivalent, since we have:
     \beq f_I(\mathbf{a}^{I \, \star})+f_s(\mathbf{a}^{s \, \star}) = f(\mathbf{a}^{\star}); \eeq
     \item[\text{ii)}] If $\mathbf{P}_{\mathcal{R}(\boldsymbol{\Phi}_{d}+\boldsymbol{\Phi}_{u})^{\perp}}\by \neq \mathbf{0}$, the disjoint filtering strategy leads  smaller filter estimation errors  since it holds:
     \beq f_I(\mathbf{a}^{I \, \star})+f_s(\mathbf{a}^{s \, \star}) < f(\mathbf{a}^{\star});
\eeq
\end{itemize}
b) if the system $\boldsymbol{\Phi}_{t}  \ba =\mathbf{h}$ is consistent, then the disjoint and joint filtering strategies are equivalent  since we get
\beq
f_I(\mathbf{a}^{I \, \star})=f_s(\mathbf{a}^{s \, \star})= f(\mathbf{a}^{\star})=0.
\eeq}
\end{proposition}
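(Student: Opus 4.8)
The plan is to recast all three least-squares problems as ordinary linear least-squares problems in the vectorized matrix space $\mathbb{R}^{E^2}$ endowed with the Frobenius inner product $\langle\mathbf{A},\mathbf{B}\rangle=\tr(\mathbf{A}^T\mathbf{B})$, so that each optimal value becomes the squared norm of an orthogonal-projection residual, and then to reduce the comparison to a single projection identity by exploiting the Hodge orthogonality $\mB_1\mB_2=\mathbf{0}$. First, using $\mL_1^k=\mL_{1,d}^k+\mL_{1,u}^k$ and the fact that the $k$-th columns of $\boldsymbol{\Phi}_{d},\boldsymbol{\Phi}_{u}$ are $\text{vec}(\mL_{1,d}^k),\text{vec}(\mL_{1,u}^k)$, the objective of (\ref{eq:H_g1}), written in vectorized form, equals $\|\text{vec}(\mH)-(\boldsymbol{\Phi}_{d}+\boldsymbol{\Phi}_{u})\ba\|^2$, so $f(\ba^{\star})=\|\mathbf{P}_{\mathcal{R}(\boldsymbol{\Phi}_{d}+\boldsymbol{\Phi}_{u})^{\perp}}\text{vec}(\mH)\|^2$; likewise, using the orthogonal splitting (\ref{eq:filter_sol_irr}), the objectives of (\ref{eq:H_irr1}) and (\ref{eq:H_sol1}) equal $\|\text{vec}(\mH_{d})-\boldsymbol{\Phi}_{d}\ba^{I}\|^2$ and $\|\text{vec}(\mH_{u})-\boldsymbol{\Phi}_{u}\ba^{s}\|^2$, hence $f_I(\ba^{I\star})=\|\mathbf{P}_{\mathcal{R}(\boldsymbol{\Phi}_{d})^{\perp}}\text{vec}(\mH_{d})\|^2$ and $f_s(\ba^{s\star})=\|\mathbf{P}_{\mathcal{R}(\boldsymbol{\Phi}_{u})^{\perp}}\text{vec}(\mH_{u})\|^2$.

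Next I would establish the orthogonality structure that drives the proof. Because $\mB_1\mB_2=\mathbf{0}$ gives $\mL_{1,d}\mL_{1,u}=\mathbf{0}$, one has $\langle\text{vec}(\mL_{1,d}^j),\text{vec}(\mL_{1,u}^k)\rangle=\tr(\mL_{1,d}^j\mL_{1,u}^k)=0$ for all $j,k\ge 1$, i.e. $\mathcal{R}(\boldsymbol{\Phi}_{d})\perp\mathcal{R}(\boldsymbol{\Phi}_{u})$. Diagonalizing in the eigenbasis of $\mL_1$, every column of $\boldsymbol{\Phi}_{d}$ and also $\text{vec}(\mH_{d})$ lie in the subspace $\mathcal{D}:=\text{span}\{\text{vec}(\bu\bu^T):\ \bu \text{ an eigenvector of a nonzero eigenvalue of }\mL_{1,d}\}$, and symmetrically $\boldsymbol{\Phi}_{u}$ and $\text{vec}(\mH_{u})$ lie in the analogous subspace $\mathcal{U}$; property (i) recalled after (\ref{L1up+L1down}) then gives $\mathcal{D}\perp\mathcal{U}$. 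Consequently $\text{vec}(\mH)=\text{vec}(\mH_{d})+\text{vec}(\mH_{u})$ is an orthogonal decomposition satisfying the cross relations $\text{vec}(\mH_{u})\perp\mathcal{R}(\boldsymbol{\Phi}_{d})$ and $\text{vec}(\mH_{d})\perp\mathcal{R}(\boldsymbol{\Phi}_{u})$.

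Now I combine. The cross relations give $\mathbf{P}_{\mathcal{R}(\boldsymbol{\Phi}_{d})}\text{vec}(\mH)=\mathbf{P}_{\mathcal{R}(\boldsymbol{\Phi}_{d})}\text{vec}(\mH_{d})$ and $\mathbf{P}_{\mathcal{R}(\boldsymbol{\Phi}_{u})}\text{vec}(\mH)=\mathbf{P}_{\mathcal{R}(\boldsymbol{\Phi}_{u})}\text{vec}(\mH_{u})$, so the vector $\by$ of the statement is exactly the orthogonal projection $\by=\mathbf{P}_{\mathcal{R}(\boldsymbol{\Phi}_{d})\oplus\mathcal{R}(\boldsymbol{\Phi}_{u})}\text{vec}(\mH)$ (the two projectors add up to the projector onto the orthogonal direct sum, since $\mathcal{R}(\boldsymbol{\Phi}_{d})\perp\mathcal{R}(\boldsymbol{\Phi}_{u})$), whence $\|\by\|^2=\|\mathbf{P}_{\mathcal{R}(\boldsymbol{\Phi}_{d})}\text{vec}(\mH_{d})\|^2+\|\mathbf{P}_{\mathcal{R}(\boldsymbol{\Phi}_{u})}\text{vec}(\mH_{u})\|^2$. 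Since $\mathcal{R}(\boldsymbol{\Phi}_{d}+\boldsymbol{\Phi}_{u})\subseteq\mathcal{R}(\boldsymbol{\Phi}_{d})\oplus\mathcal{R}(\boldsymbol{\Phi}_{u})$, the tower property of nested orthogonal projections yields $\mathbf{P}_{\mathcal{R}(\boldsymbol{\Phi}_{d}+\boldsymbol{\Phi}_{u})}\text{vec}(\mH)=\mathbf{P}_{\mathcal{R}(\boldsymbol{\Phi}_{d}+\boldsymbol{\Phi}_{u})}\by$. Substituting into the expressions from the first step and using $\|\text{vec}(\mH)\|^2=\|\text{vec}(\mH_{d})\|^2+\|\text{vec}(\mH_{u})\|^2$, everything collapses to
\beq
\begin{split}
f(\ba^{\star})-\big(f_I(\ba^{I\star})+f_s(\ba^{s\star})\big)&=\|\by\|^2-\|\mathbf{P}_{\mathcal{R}(\boldsymbol{\Phi}_{d}+\boldsymbol{\Phi}_{u})}\by\|^2\\
&=\|\mathbf{P}_{\mathcal{R}(\boldsymbol{\Phi}_{d}+\boldsymbol{\Phi}_{u})^{\perp}}\by\|^2\ge 0 .
\end{split}
\eeq
Claims a)-i) and a)-ii) then follow at once according to whether $\mathbf{P}_{\mathcal{R}(\boldsymbol{\Phi}_{d}+\boldsymbol{\Phi}_{u})^{\perp}}\by$ vanishes or not, and b) follows because consistency of the stacked system $\boldsymbol{\Phi}_{t}\ba=\mathbf{h}$ makes each block $\boldsymbol{\Phi}_{s}\ba=\mathbf{h}^s$, $\boldsymbol{\Phi}_{I}\ba=\mathbf{h}^I$ consistent, forcing the nonnegative quantities $f(\ba^\star)$, $f_s(\ba^{s\star})$ and $f_I(\ba^{I\star})$ all to vanish. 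The hypothesis that $\boldsymbol{\Phi}_{d},\boldsymbol{\Phi}_{u}$ have full column rank is used only to guarantee that $\boldsymbol{\Phi}_{d}+\boldsymbol{\Phi}_{u}$ is itself full column rank, so that $\mathcal{R}(\boldsymbol{\Phi}_{d}+\boldsymbol{\Phi}_{u})$ is a proper subspace of the direct sum and case a)-ii) can genuinely occur, and that the three minimizers are unique.

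The step I expect to be the real obstacle is the identification $\by=\mathbf{P}_{\mathcal{R}(\boldsymbol{\Phi}_{d})\oplus\mathcal{R}(\boldsymbol{\Phi}_{u})}\text{vec}(\mH)$: it hinges on showing that the powers $\mL_{1,d}^k$ and $\mL_{1,u}^k$, together with the target pieces $\mH_{d}$ and $\mH_{u}$, occupy mutually orthogonal subspaces of $\mathbb{R}^{E^2}$, i.e. on transporting the Hodge orthogonality $\mB_1\mB_2=\mathbf{0}$ from $\mathbb{R}^E$ to the vectorized matrix space; once that is in place, the remainder is routine bookkeeping with orthogonal projections.
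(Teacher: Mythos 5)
Your proposal is correct and follows essentially the same route as the paper's Appendix A: vectorize the three least-squares problems, use $\mB_1\mB_2=\mathbf{0}$ to get $\mathcal{R}(\boldsymbol{\Phi}_d)\perp\mathcal{R}(\boldsymbol{\Phi}_u)$ (so that $\mathbf{P}_{\mathcal{R}(\boldsymbol{\Phi}_d)}+\mathbf{P}_{\mathcal{R}(\boldsymbol{\Phi}_u)}$ is the projector onto the orthogonal direct sum containing $\mathcal{R}(\boldsymbol{\Phi}_d+\boldsymbol{\Phi}_u)$), and reduce the gap between joint and disjoint designs to $\parallel\mathbf{P}_{\mathcal{R}(\boldsymbol{\Phi}_d+\boldsymbol{\Phi}_u)^{\perp}}\by\parallel^2$. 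Your double application of the Pythagorean identity reaches that expression somewhat more directly than the paper's decomposition of $\mathbf{P}_{\mathcal{M}^{\perp}}$ into two orthogonal pieces, but the key facts and the structure of the argument are the same.
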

\begin{proof}
See  Appendix C.
\end{proof}

From Prop. $1$, we can state  that, in terms of estimation error, the advantage achievable using the separate design depends on the structure of the desired filter $\mH$.
More specifically, by keeping  the length $L$ of the approximating FIR filters  fixed,  
if the system $\boldsymbol{\Phi}_t\ba=\mathbf{h}$ is consistent, then the common or separate design would provide the same accuracy.
If, instead, the system is inconsistent and the matrices $\boldsymbol{\Phi}_d$
and $\boldsymbol{\Phi}_u$ are full column rank, then we compute first the projection $\by$ of $\text{vec}(\mH)$ onto the subspace $\mathcal{R}(\boldsymbol{\Phi}_d)+\mathcal{R}(\boldsymbol{\Phi}_u)$. If $\by$
has a component falling outside the subspace $\mathcal{R}(\boldsymbol{\Phi}_d+\boldsymbol{\Phi}_u)$, then the separate design of the  solenoidal and irrotational parts yields a lower estimation error.\\
As a first numerical test, we compare the performance of  our filtering strategy with the approach proposed in \cite{Yang2021}, 
adapted to filtering signals defined over cell complexes by simply replacing $\mB_2$ with the matrix in  (\ref{eq:B2_cell}) capturing the upper incidence relations of a cell complex. Our optimal FIR filters are derived by solving the least squares problem in  (\ref{eq:H_sol1}) and (\ref{eq:H_irr1}), where we set $\alpha_0=0$. Indeed, albeit the difference between our approach and that of   \cite{Yang2021} is minimum, because we simply set $\alpha_0=0$ a priori, while in \cite{Yang2021} the choice of $\alpha_0$ is part of the optimization,  we will show that there is a clear advantage in terms of  accuracy. As a  numerical example, we consider the filtering of  the solenoidal component over a cell complex composed of $N=80$ vertices, $E=188$ edges and $96$  cells of order $2$ with $\text{dim}(\text{ker}(\mL_1))=13$.
The input signal $\bs^{1}$  is the sum of a solenoidal plus a harmonic signal, generated as $\bs^1_{sol}=\mU_{u} \bx_{sol}$ and $\bs^1_{H}=\mU_{H} \bx_{H}$, respectively,
where $\mU_{H}$ is a set of orthonormal vectors spanning the harmonic subspace and the vectors $\bx_{sol}, \bx_{H}$ have random i.i.d.  entries with uniform p.d.f. between $[0,a]$. To check the impact of the harmonic component, we keep the variance of the  solenoidal part  $\sigma^2_{sol}$ fixed, by setting $a=10$, and we increase the variance $\sigma_{H}^2$ of the harmonic component.
The frequency response of the solenoidal filter is chosen equal to
\begin{equation}
\label{h^s}
h^s(\boldsymbol{\lambda}_s; \alpha_s, b_s)=\frac{\beta_s}{1+ \exp(\alpha_s (\boldsymbol{\lambda}_s-b_s))}
\end{equation}
with $\beta_s=0.8$, $\alpha_s=b_s=3$, and the length of the FIR filters is  $L=11$.
In Fig. \ref{fig:conf_isufi}
we plot the average ratio $\frac{\parallel \hat{\mH}_u \bs^1-\mH_u \bs^{1}_{sol} \parallel}{\parallel \mH_u \bs^{1}_{sol} \parallel}$, i.e. the  normalized mean squared error between the signal at the output  of the solenoidal FIR filter taking as  input $\bs^1$ and the output of the ideal filter $\mH_{u}$, built as in \eqref{eq:filter_sol_irr} using the spectral mask in \eqref{h^s}, with input signal $\bs^1_{sol}$, versus the ratio $\sigma_H^2/\sigma_{sol}^2$.  The numerical results are averaged  over $50$ random signal realizations. We can notice from Fig.  \ref{fig:conf_isufi} that, while our filter is insensitive to the harmonic component, the filtering strategy proposed in \cite{Yang2021} is affected by the harmonic component.
\begin{figure}[t]
\centering
\includegraphics[width=8.5cm,height=5cm]{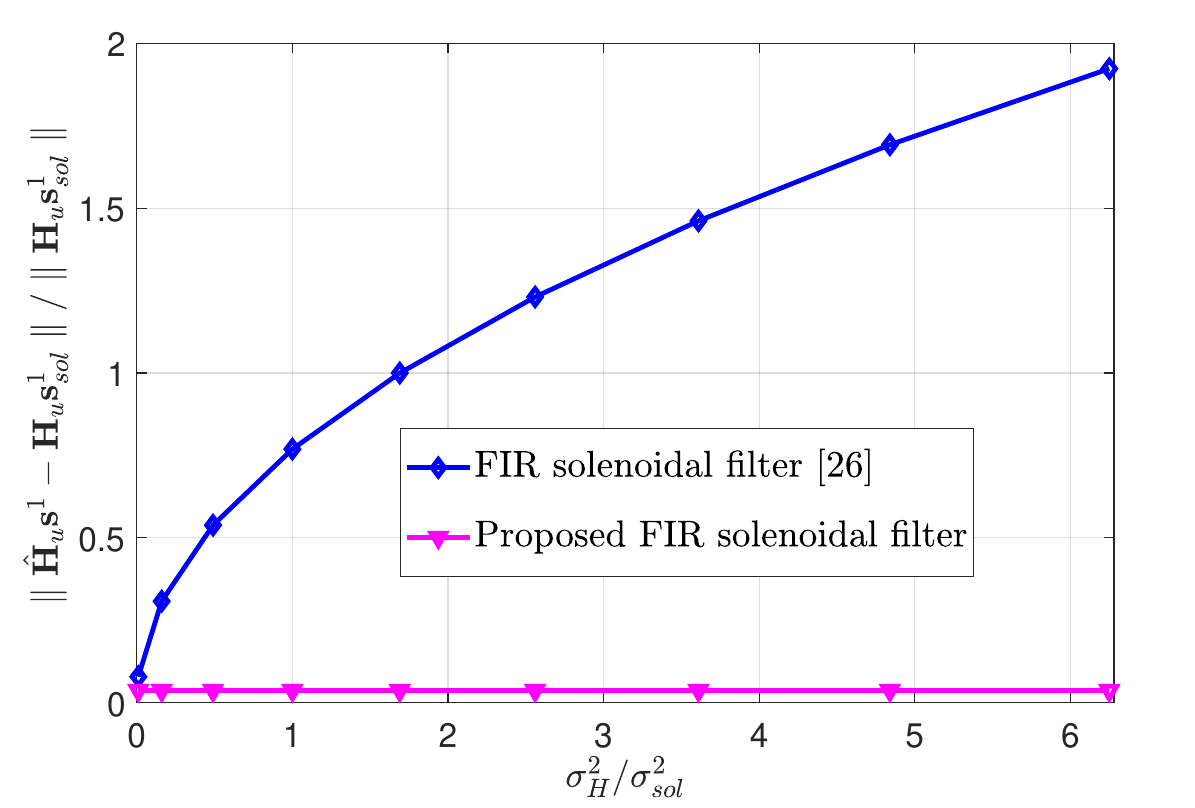}
\caption{NMSE of the filtered solenoidal signal vs the ratio $\frac{\sigma_H^2}{\sigma_{sol}^2}$.}
\label{fig:conf_isufi}
\end{figure}
    \begin{figure}[t]
\centering
\includegraphics[width=8.5cm,height=5.3cm]{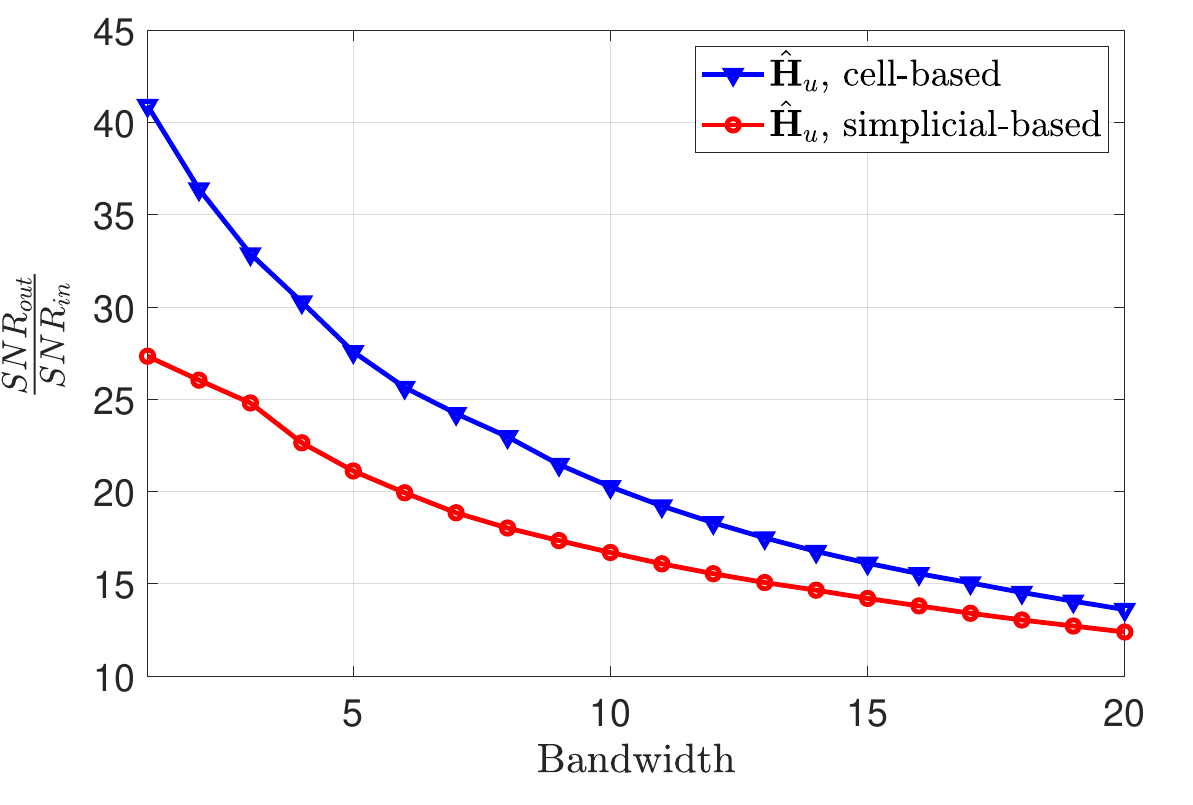}
\caption{SNR improvement factor vs the signal bandwidth.}
\label{fig:SNR_impr}
\end{figure}

To investigate the benefits of using cell-based against simplicial-based filters over  real data, 
we considered the transportation network and the associated real traffic from the  city of Anaheim \cite{Stab_data}, in the year 1992. This network is composed of $N=416$ nodes, $E=914$ links and $208$ polygons.
The flow $f_{ij}$ over each link $(i,j)$ is bidirectional so we define the flow signal over edge $(i,j)$ as  $s_{ij}=(f_{ij}+f_{j i})/2$.
Then, we built the Laplacian matrices of the simplicial complex and of the cell complexes associated to the graph 
and we computed the associated eigenvector matrices  $\mU_{sc}$ and $\mU_1$, respectively.
The two complexes have been built by filling all possible  $T=54$ triangles in the SC case and by inferring the cell complex with $198$ polygons in the CC case. Then, we projected the observed edge flow $\bs^1$ onto the two sets of eigenvectors $\mU_{sc}$ and $\mU_1$, giving rise to the two vectors
$\hat{\bs}^1_{sc}=\mU_{sc}^T \bs^1$ and $\hat{\bs}^1_{cc}=\mU_{1}^T \bs^1$.
We then build two diagonal selection matrices $\mD_{sc}$ and $\mD_1$ with  entries equal to $1$ in the positions associated with the largest $B$ frequency coefficients, respectively of $\hat{\bs}^1_{sc}$ and $\hat{\bs}^1_{cc}$.
We added a random zero-mean Gaussian noise with unit variance  to the observed flow   and then we filtered the resulting signal using the two filters $\mH_{sc}=\mU_{sc}\mD_{sc} \mU_{sc}^T$ and $\mH_{1}=\mU_{1}\mD_{1} \mU_{1}^T$. In Fig. \ref{fig:SNR_impr}
we report the SNR improvement factor  $\text{SNR}_{out}/\text{SNR}_{in}$, i.e. the ratio between the output and the input Signal-to-Noise ratio, versus the signal bandwidth $B$.
We can observe that the use of CC-based filter yields a substantial performance gain compared  to the SC-based filter, due to its better capability of filling a surface using polygons of arbitrary dimension.

\section{Conclusion}
In this paper, we proposed an algebraic framework for analyzing signals represented over cell complexes by showing the significant advantages of using cell spaces compared to simplicial complexes. \textcolor{black}{We introduced planar CC whose cells can contain holes, as a suitable framework to handle images or (approximately) piecewise signals.}
We provided efficient methods to infer a cell complex from flow signals and investigated alternative strategies to filter their harmonic, solenoidal, and irrotational components. We derived the theoretical conditions guaranteeing a performance gain associated with the independent design of solenoidal and irrotational FIR filters, as opposed to a common FIR filter design. 
\textcolor{black}{
Further investigations should address the incorporation of the proposed filtering and inference capabilities within topological deep learning frameworks, such as graph neural networks, and a more in-depth analysis of the image coding capabilities of the new (hollow) cell complexes.}


\appendices{}

\section{Proof of Theorem I}
\noindent We start by introducing all the notation.\\
\textcolor{black}{\textbf{Incidence matrices}.
Let us consider a second order planar cell complex $\mathcal{X}$ obtained by partitioning a finite bounded space.
The $1$-skeleton $\mathcal{G}=\{\mathcal{V},\mathcal{E}\}$ of $\mathcal{X}$ is composed of the union of  $n_c \geq 1$  disconnected graphs
$\mathcal{G}_i=\{\mathcal{V}_i, \mathcal{E}_i\}$, with $N_i=|\mathcal{V}_i|$ and $E_i=|\mathcal{E}_i| $.  Then,  we denote by $N=|\mathcal{V}|= \sum_{i=1}^{n_c} N_i$ and $E=|\mathcal{E}|= \sum_{i=1}^{n_c} E_i$, respectively,  the number of nodes and edges of $\mathcal{G}$.
We associate a 
subcomplex $\mathcal{C}_i=\{\mathcal{V}_i,\mathcal{E}_i,\mathcal{F}_i\}$ to each graph component $\mathcal{G}_i$, where $\mathcal{F}_i$ is the set of the $2$-order cells of $\mathcal{C}_i$, with $|\mathcal{F}_i|=F_i$. Let us also denote by $\mathcal{Q}_i$ and $\mathcal{P}_i$, respectively, the set of $2$-order cells with and without holes, so that $\mathcal{F}_i$ can be written as $\mathcal{F}_i=\mathcal{Q}_i\cup \mathcal{P}_i$.
Then, we define  $\mB_1^{(i)}$ and $\mB_2^{(i)}$ as  the node-edge  and the edge-polygon  incidence matrices of the sub-complex $\mathcal{C}_i$ associated with the single graph $\mathcal{G}_i$. The cell complex $\mathcal{X}=\{\mathcal{V}, \mathcal{E}, \mathcal{F}\}$ admits then an algebraic representation through its  incidence matrices $\mB_1$ and $\mB_2$. More specifically, since the $1$-skeleton of $\mathcal{X}$ is composed of $n_c$ disconnected graphs,  its node-edge incidence matrix $\mB_1$ has a  block diagonal structure with diagonal blocks being the  incidence matrices $\mB_1^{(i)}$ of each graph, i.e. 
\beq \label{eq:B1_bl}
\mB_1=\text{blkdiag}(\mB_1^{(1)},\ldots,\mB_1^{(n_c)}).
\eeq
Let us now define the structure of the edge-polygon  incidence matrix $\mB_2$ associate with $\mathcal{X}$. 
The orientation of the $2$-order cells is chosen such that two neighbor $2$-order cells induce opposite orientations on the shared $1$-cell boundary elements. The set $\mathcal{F}$ of the $2$-order cells  in $\mathcal{X}$ can be written as the union of the two sets $\mathcal{P}=\cup_{i=1}^{n_c}\mathcal{P}_i$   and $\mathcal{Q}=\cup_{i=1}^{n_c}\mathcal{Q}_i$, containing, respectively, the cells without and with holes in their interiors. Then, we denote by $|\mathcal{P}|=P$ and 
$|\mathcal{Q}|=Q$, the number of cells without and with holes. 
Every cell $c_k^{2}$ without holes,  $ \forall k\in \mathcal{P}$, 
is identified by its outer boundary; each hollow cell $c_k^{2}$,  with $n_k\geq 1$ holes,  for $k \in \mathcal{Q}$,  has a boundary $\partial c_k^{2}$ given by the union of one outer boundary and one or multiple interior boundaries, one for each hole.  Note that, according to the adopted orientation of $2$-order cells, such that any two neighbor cells have an opposite sign over each boundary edge,
the edges associated with the outer boundary of each subcomplex $\mathcal{C}_i$ can be represented as a  vector $\mathbf{q}^{(i)}=\mB_2^{(i)} \mathbf{1}_{F_i}$ obtained by adding all the columns of $\mB_2^{(i)}$. To simplify our notation, we associate with each $2$-order cell  $c_k^{2}$ a set of indices $\mathcal{H}_k=\{\mathcal{I}_k, \ell_k, h_k\}$ where: i) $\mathcal{I}_k \subset \{1,\ldots,n_c\}$ is the set of indices of the connected components associated with the inner boundaries of $c_k^{2}$; ii)  $\ell_k$ is the index of the connected component to which the outer boundary of $c_k^{2}$ belongs, and $h_k$ is the index in $\mathcal{Q}_{\ell_k}$ associated with the cell $c_k^{2}$.
Then, the edges/$2$-cells incidence matrix $\mB_2$ of dimension $E \times (P+Q)$ has the following structure
\beq \label{eq:B_2_planar}
\mB_2=\left[\begin{array}{c|c}
\begin{matrix}\mB_{2, \mathcal{P}_1}^{(1)} & \mathbf{O} & \cdots &\mathbf{O}\\
\mathbf{O} &\mB_{2, \mathcal{P}_2}^{(2)}&\cdots&\mathbf{O}\\
\vdots &\vdots&\vdots&\vdots \\
\mathbf{O} & \cdots & \cdots & \mB_{2, \mathcal{P}_{n_c}}^{(n_c)}\end{matrix}& 
\begin{matrix}\bar{\mathbf{b}}_{2,1}^{(1)}& \cdots&\bar{\mathbf{b}}_{2,Q}^{(1)}\\ \bar{\mathbf{b}}_{2,1}^{(2)} & \cdots &\bar{\mathbf{b}}_{2,Q}^{(2)}\\
\vdots &\vdots&\vdots \\ \bar{\mathbf{b}}_{2,1}^{(n_c)}& \cdots &\bar{\mathbf{b}}_{2,Q}^{(n_c)} \end{matrix}\end{array}
\right].
\eeq
The first $P$ columns define a block diagonal matrix where each block $\mB_{2, \mathcal{P}_i}^{(i)}$  has dimension $E_i \times P_i$ and is composed of the columns of  $\mB_{2}^{(i)}$ with indices in $\mathcal{P}_i$, i.e.,
corresponding to the $2$-order cells without holes in the component $\mathcal{G}_i$. The last $Q$ columns of $\mB_2$ determine   the outer and inner boundaries of the $Q$ cells with holes, by defining  the vectors $\bar{\mb}_{2,k}^{(j)}$ for $k=1,\ldots,Q$ as:
\beq \label{eq:bar_bi}    \bar{\mb}_{2,k}^{(j)}=\left\{ \begin{array}{lll}
    \mathbf{0}_{E_j} & \text{if}  \, j \notin \mathcal{I}_k, j\neq \ell_k, \medskip \\
 -{\mathbf{q}}^{(j)} & \forall j \in \mathcal{I}_k,
\medskip \\    
    \mB_2^{(j)}(:,h_k) & \text{if} \, j=\ell_k.
    \end{array}\right.
    \eeq
\textbf{Hodge Laplacian decomposition.}  After defining the incidence matrices, we can now introduce the Laplacian matrices.
The zero-order Laplacian $\mL_0=\mB_1 \mB_1^T$ can be  written as a block diagonal matrix with a number of blocks equal to the number of connected components of the graph.
Since each block $\mB_1^{(i)}\mB_1^{(i),T}$ is the $0$-order Laplacian associated with a connected graph, its rank is equal to $N_i-1$. Then,
the rank of $\mL_0$ is equal to $N-n_c$, and, as expected,  the dimension of $\text{ker}(\mL_0)$ is equal to the number of connected components of the graph $\mathcal{G}$.\\
  Let us now consider the lower Laplacian $\mL_{1,d}=\mB_1^T \mB_1$. From (\ref{eq:B1_bl}),  we easily get  $\mL_{1,d}=\mB_1^{T}\mB_1=\text{bkldiag}(\mB_1^{(1), T}\mB_1^{(1)}, \ldots,\mB_1^{(n_c), T}\mB_1^{(n_c)})$, then also $\mL_{1,d}$ is a block diagonal matrix. Furthermore, since it holds $\text{rank}(\mB_1^T \mB_1)=\text{rank}(\mB_1 \mB_1^T)$, the dimension of the (irrotational) space spanned by the eigenvectors of $\mL_{1,d}$ is  equal $N-n_c$. 
Denoting with $\mL_{1,u}=\mB_2 \mB_2^T$ the upper Laplacian matrix, in  the following we will show that the structure of a  planar cell complex whose cells may contain holes is still described by the first order Laplacian $\mL_1=\mL_{1,d}+\mL_{1,u}$, which induces the Hodge decomposition of the edge space.
Note that,  using the orthogonality condition $\mB_1^{(i)}\mB_2^{(i)}=0$,  it can be easily shown that 
 $\mB_1 \mB_2=\text{bkldiag}(\mB_1^{(1)},\mB_1^{(2)})\mB_2=\mathbf{0}$.
}
\textcolor{black}{\textbf{Proof of Theorem 1.} 
To prove point a) note that the vectors $\bar{\mathbf{b}}_{2,k}^{(i)}$ in (\ref{eq:bar_bi}) are either columns of the matrix ${\mathbf{B}}_{2}^{(i)}$, or their sum, or all zero vectors.
Then, 
 it  holds $\mB_1^{(i)}\bar{\mathbf{b}}_{2,k}^{(i)}=\mathbf{0}$, $\forall i,k$.   
Hence, using (\ref{eq:B_2_planar}) and the orthogonality conditions $\mB_1^{(i)}\mB_{2, \mathcal{P}_i}^{(i)}=\mathbf{0}$,
we get 
\beq \label{eq:orth_condn}
\mB_1 \mB_2=\text{blkdiag}(\mB_1^{(1)}, \ldots,\mB_1^{(n_c)} ) \mB_2=\mathbf{0},
\eeq
and this concludes the proof of  point a).
To prove point b) we can  apply Theorem 5.2 in \cite{Lim} stating that given two matrices $\mA\in \mathbb{R}^{m\times n}$ and $\mB\in \mathbb{R}^{n\times p}$ with $\mA \mB=\mathbf{0}$,
it holds $\mathbb{R}^n = \text{img}(\mA^T) \oplus \text{ker}(\mA^T \mA +\mB \mB^T) \oplus \text{img}(\mB)$. Then, from (\ref{eq:orth_condn}), we get the Hodge decomposition of the edge space as
\beq
\mathbb{R}^E = \text{img}(\mB_1^T) \oplus \text{ker}(\mL_1) \oplus \text{img}(\mB_2)
\eeq
where $\mL_1=\mB_1^T \mB_1 +\mB_2 \mB_2^T$ is the first order Laplacian matrix.
The subspace $\text{img}(\mB_1^T)$ and $\text{img}(\mB_2)$  are called, respectively, the irrotational and solenoidal subspaces, while $\text{ker}(\mL_1)$ is the harmonic subspace. From Theorem 5.2 in \cite{Lim}, it also holds that  $\text{ker}(\mL_1)=\text{ker}(\mB_1^T) \cap \text{ker}(\mB_2)$ and there is a natural isomorphism between $\text{ker}(\mL_1)$ and  the (co)homology groups
 $\text{ker}(\mB_1)/\text{ker}(\mB_2)$ identifying topological invariants (see  Theorem 5.3 \cite{Lim}). Then,  we  can now prove that   the kernel of the Hodge Laplacians associated with  a cell complex having cells not homeomorphic to $2D$-balls provides information on the graph connectivity and on the number of holes in the complex.
 Considering the 0-order Laplacian $\mL_0=\text{blkdiag}(\mB_1^{(1)} \mB_1^{(1),T}, \ldots,\mB_1^{(n_c)} \mB_1^{(n_c),T})$, we have  $\text{rank}(\mL_0)=N-n_c$. Thus, the dimension of $\text{ker}(\mL_0)$ represents the number of connected 
components of the graph. Let us now consider the lower Laplacian matrix $\mL_{1,d}=\mB_1^T \mB_1=\text{blkdiag}(\mB_1^{(1), T} \mB_1^{(1)}, \ldots,\mB_1^{(n_c),T} \mB_1^{(n_c)})$. Since  each  matrix $\mB_1^{(i), T} \mB_1^{(i)}$ has rank $N_i-1$,  the dimension of the irrotational subspace spanned  by the eigenvectors of $\mL_{1,d}$ is equal to $\sum_{i=1}^{n_c}N_i-1=N-n_c$.\\
Let us now consider the dimension of the solenoidal subspace spanned by the eigenvectors of the upper Laplacian matrix $\mL_{1,u}=\mB_2 \mB_2^T$.
From the Hodge decomposition it follows that  $E=\text{rank}(\mL_{1,d})+\text{rank}(\mL_{1,u})+\text{dim}(\text{ker}(\mL_1))$.  Therefore, when $\text{dim}(\text{ker}(\mL_1))=0$, the maximum rank $r_{\text{max}}$ of the upper Laplacian matrix $\mL_{1,u}$, when all the $2$-order cells are filled, is 
\beq \label{eq:rank_L1u}
r_{\text{max}}=E-N+n_c.
\eeq
On the other hand, the incidence matrix $\mB_2$ is an $E \times F$-dimensional matrix with $F$ the number of possible $2$-order cells in the complex. 
From the Euler's Theorem for planar graphs with $n_c$ connected components \cite{gallier2011discrete}, we have that the number of possible faces, excluding the unbounded external face,  is 
\beq 
\label{eq:Euler}F=E-N+n_c. \eeq
Then, when $\mB_2$ contains all the possible $F$ independent columns,    its rank is maximum, and using (\ref{eq:rank_L1u}) and (\ref{eq:Euler}),  we get $r_{\text{max}}=E-N+n_c=F$. Then, from the Hodge decomposition, it holds that $\text{dim}(\text{ker}(\mL_1))=0$ when all  $2$-order cells are filled. If the complex contains $n_h$ holes, the corresponding columns of $\mB_2$ are missing, and as a result, the rank is reduced by $n_h$. 
Therefore, 
the dimension of the harmonic subspace, called also first Betti number, represents for a   complex partitioning a closed planar space the number of $1$-dimensional holes in the complex, how it happens for cell complexes with connected $1$-skeletons \cite{friedman1996computing}, \cite{grady2010}.}
\vspace{0.2cm}
\section{Optimal  solutions of problems (32) and (33)}
The two least squares problems (32), (33) admit always  at least one solution with minimum norm given by the  pseudoinverse in (34). Furthermore, if the two  matrices $\boldsymbol{\Phi}_{s}$ and $\boldsymbol{\Phi}_{I}$ of size $n_u \times L$ and $n_I \times L$, respectively, are fat so that $L>n_u$ and $L>n_d$, then, due to their structure in (28), their ranks are equal to $n_u$ and $n_d$.  This implies that $\text{rank}([\boldsymbol{\Phi}_{s}, \mathbf{h}^s])=\text{rank}(\boldsymbol{\Phi}_{s})$ and
$\text{rank}([\boldsymbol{\Phi}_{I}, \mathbf{h}^I])=\text{rank}(\boldsymbol{\Phi}_{I})$ so that the two systems $\boldsymbol{\Phi}_{s} \ba^s=\mathbf{h}^s$, $\boldsymbol{\Phi}_{I} \ba^I=\mathbf{h}^I$ are consistent, i.e. $f_s(\ba^{s \star})=f_I(\ba^{I \star})=0$.
If $L=n_u$ and/or $L=n_d$ then  the solution of  (32) and/or (33) are unique and given by $\ba^{s \star}=(\boldsymbol{\Phi}_{s}^T \boldsymbol{\Phi}_{s})^{-1} \boldsymbol{\Phi}_{s}^T \mathbf{h}^s$ and $\ba^{I \star}=(\boldsymbol{\Phi}_{I}^T \boldsymbol{\Phi}_{I})^{-1} \boldsymbol{\Phi}_{I}^T \mathbf{h}^I$.

\vspace{0.2cm}
\section{Proof of Proposition 2}
\label{sec:Prop_2}
Le us consider the convex quadratic functions $f(\mathbf{a})$, $f_s(\mathbf{a}^s)$  and  $f_I(\mathbf{a}^I)$  in (30), (33) and (32).
Using (29), we have $f(\mathbf{a})=f_I(\mathbf{a})+f_s(\mathbf{a})$, then problem in (30) can be also written as
\beq \label{eq:comb}
\underset{\ba \in \mathbb{R}^{L}}{\min} \qquad f(\mathbf{a})=f_I(\mathbf{a})+f_s(\mathbf{a}).
\eeq
Denote with $f(\mathbf{a}^\star)$  the minimum value of the objective function of  the convex problem (\ref{eq:comb}) at the optimal point $\mathbf{a}^{\star}$.
If $\mathbf{a}^{I \, \star}$ and $\mathbf{a}^{s \, \star}$ are two optimal points of the convex problems  (33) and (32),  then it holds
\beq
f_I(\mathbf{a}^{I \, \star})\leq f_I(\mathbf{a}), \quad \text{and} \quad f_s(\mathbf{a}^{s \, \star})\leq f_s(\mathbf{a}), \quad \forall \mathbf{a} \in \mathbb{R}^L
\eeq
so that
\beq
f_I(\mathbf{a}^{I \, \star})+f_s(\mathbf{a}^{s \, \star})\leq f_I(\mathbf{a})+f_s(\mathbf{a})=f(\mathbf{a}), \quad \forall \mathbf{a} \in \mathbb{R}^L.
\eeq
 From  (\ref{eq:comb}), this implies that
\beq \label{eq:ineq_disj}
f_I(\mathbf{a}^{I \, \star})+f_s(\mathbf{a}^{s \, \star})\leq f(\mathbf{a}^{\star})
\eeq
so that the error in the filter estimation using separate FIR filters is upper bounded by the estimation error obtained with a single  filter.
Let us now prove point a) of the proposition, i.e. assuming the system $\boldsymbol{\Phi}_{t}  \ba =\mathbf{h}$ to be inconsistent, we provide sufficient conditions under which  the inequality in (\ref{eq:ineq_disj}) is strict, i.e.
\beq
f_I(\mathbf{a}^{I \, \star})+f_s(\mathbf{a}^{s \, \star}) < f(\mathbf{a}^{\star}).
\eeq
First, we may observe that the error function $f(\ba)$ can also be written as
\beq \label{eq:l2}
\begin{split}
&f(\ba)=\parallel \mathbf{H}-\ds \sum_{k=1}^{K} a_k\, \mL_1^k  \parallel^2_F  = \parallel \text{vec}(\mathbf{H})-\ds \sum_{k=1}^{K} a_k\, \text{vec}(\mL_1^k)   \parallel^2_F\\& = \parallel  \text{vec}(\mathbf{H})-\ds \sum_{k=1}^{K} a_k\, [\text{vec}(\mL_{1,d}^k)+\text{vec}(\mL_{1,u}^k)]  \parallel^2_F\\
& = \parallel  \text{vec}(\mathbf{H})- (\boldsymbol{\Phi}_{d}+\boldsymbol{\Phi}_{u}) \mathbf{a} \parallel^2_F
\end{split}
\eeq
where the last equality follows from (35).

Then, an optimal point $\ba^{\star}$ minimizing the least square function in (\ref{eq:l2}) is given by
\beq \label{eq:opt_point}
\mathbf{a}^{\star}=(\boldsymbol{\Phi}_{d}+\boldsymbol{\Phi}_{u})^{\dag}\text{vec}(\mathbf{H}).
\eeq
Similarly, since it also holds
\beq \label{eq:l4}
 f_s(\ba^s)=\parallel  \text{vec}(\mathbf{H}_{u})- \boldsymbol{\Phi}_{u} \mathbf{a}^s \parallel^2_F, \;  f_I(\ba^I)=\parallel  \text{vec}(\mathbf{H}_{d})- \boldsymbol{\Phi}_{d} \mathbf{a}^I \parallel^2_F
\eeq
we can easily prove that the optimal solutions of the optimization problems in (32) and (33) are
\beq \label{eq:opt_s}
\mathbf{a}^{s\, \star}=\boldsymbol{\Phi}_{u}^{\dag}\text{vec}(\mathbf{H}_{u}), \qquad
\mathbf{a}^{I\, \star}=\boldsymbol{\Phi}_{d}^{\dag}\text{vec}(\mathbf{H}_{d}).
\eeq
Then, replacing in (\ref{eq:ineq_disj}) the optimal solutions in  (\ref{eq:opt_point}) and (\ref{eq:opt_s}), we obtain
\beq \label{eq:ineq_norm}
\begin{split}
& \parallel  [\mI - \boldsymbol{\Phi}_{d}  \boldsymbol{\Phi}_{d}^{\dag}]\text{vec}(\mathbf{H}_{d}) \parallel^2_F+ \parallel [\mI - \boldsymbol{\Phi}_{u}  \boldsymbol{\Phi}_{u}^{\dag}] \text{vec}(\mathbf{H}_{u}) \parallel^2_F \leq \\
& \parallel [\mI - (\boldsymbol{\Phi}_{d}+\boldsymbol{\Phi}_{u})(\boldsymbol{\Phi}_{d}+\boldsymbol{\Phi}_{u})^{\dag}] \text{vec}(\mathbf{H}) \parallel^2_F=\\ & \parallel [\mI - (\boldsymbol{\Phi}_{d}+\boldsymbol{\Phi}_{u})(\boldsymbol{\Phi}_{d}+\boldsymbol{\Phi}_{u})^{\dag}] (\text{vec}(\mathbf{H}_{d})+ \text{vec}(\mathbf{H}_{u}))\parallel^2_F.
\end{split}
\eeq
In order to show under which conditions  this last inequality holds strict, let us first derive some properties of the  subspace spanned by the columns of the matrix $\boldsymbol{\Phi}_{d}+\boldsymbol{\Phi}_{u}$.
Using the identity $\text{vec}(\mA \mB \mC)=(\mC^T \otimes \mA) \text{vec}(\mB)$ and from (26), we can write
\beq \label{eq:vec_H}
\begin{split}
&\text{vec}(\mH)=\text{vec}(\mH_{u})+\text{vec}(\mH_{d})\\
& = (\mU_{u} \otimes \mU_{u})\text{vec}(h^s(\boldsymbol{\Lambda}_{u}))+ (\mU_{d} \otimes \mU_{d})\text{vec}(h^I(\boldsymbol{\Lambda}_{d})).
\end{split}
\eeq
Hence  (35) can be also written as
\beq
\begin{split}
& \boldsymbol{\Phi}_{d}=(\mU_{d} \otimes \mU_{d}) [\text{vec}(\boldsymbol{\Lambda}_{d}),\ldots, \text{vec}(\boldsymbol{\Lambda}_{d}^K)]\\
& \boldsymbol{\Phi}_{u}=(\mU_{u} \otimes \mU_{u}) [\text{vec}(\boldsymbol{\Lambda}_{u}),\ldots, \text{vec}(\boldsymbol{\Lambda}_{u}^K)]\end{split}
\eeq
so that it holds:
\beq \label{eq:ort_cond}
\boldsymbol{\Phi}_{d}^T \boldsymbol{\Phi}_{u}=\mathbf{0}, \quad \text{and} \quad
\boldsymbol{\Phi}_{u}^T \boldsymbol{\Phi}_{d}=\mathbf{0}.
\eeq
Then, defining the  subspaces of $\mathbb{R}^{E^2}$, $\mathcal{N}=\mathcal{R}(\boldsymbol{\Phi}_{d})$, $\mathcal{Q}=\mathcal{R}(\boldsymbol{\Phi}_{u})$ and $\mathcal{M}=\mathcal{R}(\boldsymbol{\Phi}_{d}+\boldsymbol{\Phi}_{u})$, let us consider the  projections $\mP_{(\mathcal{N}+\mathcal{Q})^{\perp}}$, $\mP_{\mathcal{M}^{\perp}}$ onto the orthogonal subspaces $(\mathcal{N}+\mathcal{Q})^{\perp}$ and $\mathcal{M}^{\perp}$, respectively, given by
\beq \label{eq:M_ort}
\mP_{\mathcal{M}^{\perp}}=\mI-(\boldsymbol{\Phi}_{d}+\boldsymbol{\Phi}_{u})(\boldsymbol{\Phi}_{d}+\boldsymbol{\Phi}_{u})^{\dag}
\eeq
and
\beq \label{eq:sum_sub}
\mP_{(\mathcal{N}+\mathcal{Q})^{\perp}}=\mI-\mP_{\mathcal{N}+\mathcal{Q}}.
\eeq
Since (see \cite{bernstein2005}[pp. 234]), $\mA^{\dag} \mB=\mathbf{0}$ if and only if $\mA^T \mB=\mathbf{0}$,  using  (\ref{eq:ort_cond}), it follows that \beq \label{eq:phi_ort}
\begin{split} & \boldsymbol{\Phi}_{d}^{\dag} \boldsymbol{\Phi}_{u}=\mathbf{0}, \quad \boldsymbol{\Phi}_{u}^{\dag} \boldsymbol{\Phi}_{d}=\mathbf{0},
\end{split}
\eeq then, defining the projections $\mP_{\mathcal{N}}=\boldsymbol{\Phi}_{d} \boldsymbol{\Phi}_{d}^{\dag}$, and $\mP_{\mathcal{Q}}=\boldsymbol{\Phi}_{u}\boldsymbol{\Phi}_{u}^{\dag}$, we have
\beq
\mP_{\mathcal{N}} \mP_{\mathcal{Q}}= \boldsymbol{\Phi}_{d} \boldsymbol{\Phi}_{d}^{\dag} \boldsymbol{\Phi}_{u}\boldsymbol{\Phi}_{u}^{\dag}=\mathbf{0}
\eeq
or, equivalently, $\mP_{\mathcal{N}} \cap \mP_{\mathcal{Q}}=\emptyset$.
As proved in \cite{piziak1999}[pp. 72], if $\mP_{\mathcal{N}} \cap \mP_{\mathcal{Q}}=\emptyset$, then it holds
\beq \label{eq:sumP}
\mP_{\mathcal{N}+\mathcal{Q}}=\mP_{\mathcal{N}}+\mP_{\mathcal{Q}},
\eeq
so (\ref{eq:sum_sub}) reduces to
\beq \label{eq:sum1}
\mP_{(\mathcal{N}+\mathcal{Q})^{\perp}}=\mI-\mP_{\mathcal{N}}-\mP_{\mathcal{Q}}.
\eeq
To prove point a) of the Proposition, we need first to show that:
\begin{itemize}
    \item[1)] The following   strict inclusion among subspaces holds:
    \beq \label{eq:range_inclus1}
\mathcal{R}(\boldsymbol{\Phi}_{d}+\boldsymbol{\Phi}_{u}) \subset \mathcal{R}(\boldsymbol{\Phi}_{d})+\mathcal{R}(\boldsymbol{\Phi}_{u});
\eeq
\item[2)] For any $\bx \in \mathbb{R}^{E^2}$  we have
\beq \label{eq:dis3}
\parallel \mathbf{P}_{(\mathcal{N}+\mathcal{Q})^{\perp}} \bx \parallel^2 \leq \parallel \mathbf{P}_{\mathcal{M}^{\perp}} \bx \parallel^2.
\eeq
\end{itemize}
Let us prove the first point. Given the  two matrices $\boldsymbol{\Phi}_{d}$ and $\boldsymbol{\Phi}_{u}$, it always holds \cite{Meyer}[pag. 205]
\beq \label{eq:rank_d}
\text{rank}(\boldsymbol{\Phi}_{d}+\boldsymbol{\Phi}_{u}) \leq \text{rank}(\boldsymbol{\Phi}_{d})+ \text{rank}(\boldsymbol{\Phi}_{u}).
\eeq
Let us now show that the inequality in (\ref{eq:rank_d}) holds strict.
Note that $\boldsymbol{\Phi}_{d}$ and $\boldsymbol{\Phi}_{u}$ are full column rank  matrices with $\text{rank}(\boldsymbol{\Phi}_{u})=\text{rank}(\boldsymbol{\Phi}_{d})=K$, and $\text{max}(\text{rank}(\boldsymbol{\Phi}_{d}+\boldsymbol{\Phi}_{u}))=K$, then 
 it  follows that the inequality in (\ref{eq:rank_d}) holds strictly, i.e.
 \beq \label{eq:rank_ineq}
 \text{rank}(\boldsymbol{\Phi}_{d}+\boldsymbol{\Phi}_{u}) < \text{rank}(\boldsymbol{\Phi}_{d})+ \text{rank}(\boldsymbol{\Phi}_{u}).
 \eeq
Furthermore, 
it holds
 \cite{Meyer}[pag. 206]
\beq \label{eq:range_inclus}
\mathcal{R}(\boldsymbol{\Phi}_{d}+\boldsymbol{\Phi}_{u}) \subseteq \mathcal{R}(\boldsymbol{\Phi}_{d})+\mathcal{R}(\boldsymbol{\Phi}_{u}).
\eeq
From this last equation, it follows that \cite{Meyer}[pag. 198]
\beq \label{eq:dis2}
\text{dim}(\mathcal{R}(\boldsymbol{\Phi}_{d}+\boldsymbol{\Phi}_{u})) \leq \text{dim}(\mathcal{R}(\boldsymbol{\Phi}_{d})+\mathcal{R}(\boldsymbol{\Phi}_{u})).
\eeq
Then, using the equality
\beq \label{eq:dis1}
\text{rank}(\boldsymbol{\Phi}_{d}+\boldsymbol{\Phi}_{u})= \text{dim}(\mathcal{R}(\boldsymbol{\Phi}_{d}+\boldsymbol{\Phi}_{u})),
\eeq
we can write
\beq \label{eq: disf}
\begin{split}
\text{rank}(\boldsymbol{\Phi}_{d}+\boldsymbol{\Phi}_{u})&\underset{(a)}{=} \text{dim}(\mathcal{R}(\boldsymbol{\Phi}_{d}+\boldsymbol{\Phi}_{u})) \\
& \underset{(b)}{\leq}  \text{dim}(\mathcal{R}(\boldsymbol{\Phi}_{d})+\mathcal{R}(\boldsymbol{\Phi}_{u}))\\
&\underset{(c)}{=} \text{dim}(\mathcal{R}(\boldsymbol{\Phi}_{d}))+\text{dim}(\mathcal{R}(\boldsymbol{\Phi}_{u})- \\
& \quad \text{dim}(\mathcal{R}(\boldsymbol{\Phi}_{d}) \cap \mathcal{R}(\boldsymbol{\Phi}_{u}))
\end{split}
\eeq
where equality (a) follows from (\ref{eq:dis1}), (b)  from   (\ref{eq:dis2}) and (c) from the following  equality \cite{Meyer}[pag. 205]
\beq
\text{dim}(\mathcal{A}+\mathcal{B})=
\text{dim}(\mathcal{A})+\text{dim}(\mathcal{B})-\text{dim}(\mathcal{A}\cap \mathcal{B})
\eeq
where $\mathcal{A}$ and $\mathcal{B}$ are subspaces of a vector space $\mathcal{V}$.
Let us now prove that
\beq \label{eq:orth_cond}
\mathcal{R}(\boldsymbol{\Phi}_{d}) \cap \mathcal{R}(\boldsymbol{\Phi}_{u})=\emptyset.
\eeq
Using (\ref{eq:ort_cond})  $\forall \bb_u \in \mathcal{R}(\boldsymbol{\Phi}_{u})$ with $\bb_u=\boldsymbol{\Phi}_{u} \bx_{u}$, and
$\forall \bb_d \in \mathcal{R}(\boldsymbol{\Phi}_{d})$ with $\bb_d=\boldsymbol{\Phi}_{d} \bx_{d}$, we have  $\bb_d^T \bb_u=0$ and, this proves equation (\ref{eq:orth_cond}).
Hence, (\ref{eq: disf}) reduces to
\beq \label{eq: disf1}
\begin{split}
\text{rank}(\boldsymbol{\Phi}_{d}+\boldsymbol{\Phi}_{u})&\underset{(a)}{=} \text{dim}(\mathcal{R}(\boldsymbol{\Phi}_{d}+\boldsymbol{\Phi}_{u})) \\
& \underset{(b)}{\leq}  \text{dim}(\mathcal{R}(\boldsymbol{\Phi}_{d})+\mathcal{R}(\boldsymbol{\Phi}_{u}))\\
& {=} \, \text{dim}(\mathcal{R}(\boldsymbol{\Phi}_{d}))+\text{dim}(\mathcal{R}(\boldsymbol{\Phi}_{u})\\ & {=} \, \text{rank}(\boldsymbol{\Phi}_{d})+
\text{rank}(\boldsymbol{\Phi}_{u})
\end{split}
\eeq
and from (\ref{eq:rank_ineq}), it follows that the inequality in (b) holds strict, i.e.
\beq
\text{dim}(\mathcal{R}(\boldsymbol{\Phi}_{d}+\boldsymbol{\Phi}_{u})) < \text{dim}(\mathcal{R}(\boldsymbol{\Phi}_{d}))+\text{dim}(\mathcal{R}(\boldsymbol{\Phi}_{u})).
\eeq
This last inequality together with (\ref{eq:range_inclus}) implies    equation (\ref{eq:range_inclus1}), i.e.
\beq \label{eq:range_inclus2}
\mathcal{R}(\boldsymbol{\Phi}_{d}+\boldsymbol{\Phi}_{u}) \subset \mathcal{R}(\boldsymbol{\Phi}_{d})+\mathcal{R}(\boldsymbol{\Phi}_{u}).
\eeq
Let us now prove  the inequality in (\ref{eq:dis3}).
It is known that given two subspaces $\mathcal{A}$,
$\mathcal{B}$, then $\mathcal{A}\subset \mathcal{B}$ if and only if $\mathcal{B}^{\perp} \subset \mathcal{A}^{\perp}$ \cite{bernstein2005}[pag. 48].
Therefore,  from (\ref{eq:range_inclus2}) we get
\beq \label{eq:range_inclus3}
 (\mathcal{R}(\boldsymbol{\Phi}_{d})+\mathcal{R}(\boldsymbol{\Phi}_{u}))^{\perp} \subset \mathcal{R}^{\perp}(\boldsymbol{\Phi}_{d}+\boldsymbol{\Phi}_{u})
\eeq
or  according to our notation
\beq \label{eq:sub_incl}
 (\mathcal{N}+\mathcal{Q})^{\perp}  \subset \mathcal{M}^{\perp}.
\eeq
Therefore, from (\ref{eq:sub_incl}),  for any $\bx \in \mathbb{R}^{E^2}$  it follows the inequality in (\ref{eq:dis3}), i.e.
\beq \label{eq:dis4}
\parallel \mathbf{P}_{(\mathcal{N}+\mathcal{Q})^{\perp}} \bx \parallel^2 \leq \parallel \mathbf{P}_{\mathcal{M}^{\perp}} \bx \parallel^2.
\eeq
Let us now derive some sufficient conditions under which this last inequality holds strictly.

Assume the system $(\boldsymbol{\Phi}_{d}+\boldsymbol{\Phi}_{u})  \ba =\text{vec}(\mathbf{H})$ to be inconsistent, so that
at the optimal point given by
(\ref{eq:opt_point}), we have
$$\text{vec}(\mathbf{H})-(\boldsymbol{\Phi}_{d}+\boldsymbol{\Phi}_{u})(\boldsymbol{\Phi}_{d}+\boldsymbol{\Phi}_{u})^{\dag}\text{vec}(\mathbf{H}) \neq \mathbf{0}$$
or using (\ref{eq:M_ort})
\beq \label{eq:P_M_perp1}
 \mathbf{P}_{\mathcal{M}^{\perp}}\text{vec}(\mathbf{H}) \neq \mathbf{0}.
\eeq
 From
(\ref{eq:sub_incl}) we get
\beq \label{eq:P_M_perp}
\mP_{\mathcal{M}^{\perp}}=\mathbf{P}_{\mathcal{N}+\mathcal{Q}-\mathcal{M}}+\mP_{(\mathcal{N}+\mathcal{Q})^{\perp}},
\eeq
then (\ref{eq:P_M_perp1}) reduces to
\beq \label{eq:P_M_perp22}
\mathbf{P}_{\mathcal{M}^{\perp}}\text{vec}(\mathbf{H})=
\mathbf{P}_{\mathcal{N}+\mathcal{Q}-\mathcal{M}}\text{vec}(\mathbf{H})+\mathbf{P}_{(\mathcal{N}+\mathcal{Q})^{\perp}}\text{vec}(\mathbf{H})
\neq \mathbf{0}.
\eeq
Let us now consider the term $\mathbf{P}_{\mathcal{N}+\mathcal{Q}-\mathcal{M}}\text{vec}(\mathbf{H})$.
 Note that
 \beq \label{eq:p_eq}
 \begin{split}
 \mathbf{P}_{\mathcal{N}+\mathcal{Q}-\mathcal{M}}=&\mathbf{P}_{\mathcal{N}+\mathcal{Q}-\mathcal{M}}(\mathbf{P}_{\mathcal{N}+\mathcal{Q}}+\mathbf{P}_{(\mathcal{N}+\mathcal{Q})^{\perp}}) \\ =& \mathbf{P}_{\mathcal{N}+\mathcal{Q}-\mathcal{M}}\mathbf{P}_{\mathcal{N}+\mathcal{Q}}
 \end{split}
 \eeq
 where we used the identity $\mathbf{P}_{\mathcal{N}+\mathcal{Q}-\mathcal{M}}\mathbf{P}_{(\mathcal{N}+\mathcal{Q})^{\perp}}=\mathbf{0}$,  since from (\ref{eq:range_inclus2})  we have $\mathcal{M} \subset \mathcal{N}+\mathcal{Q}$.
 Hence, defining   $\by:=\mathbf{P}_{\mathcal{N}+\mathcal{Q}}\text{vec}(\mathbf{H})$, we can write
\beq \label{eq:P_ult3}
\begin{split}
\mathbf{P}_{\mathcal{N}+\mathcal{Q}-\mathcal{M}}\text{vec}(\mathbf{H})&=\mathbf{P}_{\mathcal{N}+\mathcal{Q}-\mathcal{M}}\mathbf{P}_{\mathcal{N}+\mathcal{Q}}\text{vec}(\mathbf{H})\\
&=\mathbf{P}_{\mathcal{N}+\mathcal{Q}-\mathcal{M}}\, \by\\
&=\mP_{\mathcal{M}^{\perp}}\by
\end{split}
\eeq
where the last equality follows from  (\ref{eq:P_M_perp}) observing that  $\mathbf{P}_{(\mathcal{N}+\mathcal{Q})^{\perp}} \by=\mathbf{0}$.
Then, replacing  (\ref{eq:P_ult3}) in equation (\ref{eq:P_M_perp22}) we get
\beq \label{eq:P_M_perp5}
\mathbf{P}_{\mathcal{M}^{\perp}}\text{vec}(\mathbf{H})=
\mP_{\mathcal{M}^{\perp}}\by+\mP_{(\mathcal{N}+\mathcal{Q})^{\perp}}\text{vec}(\mathbf{H})
\neq \mathbf{0}.
\eeq
Let us now prove that $\by \neq \mathbf{0}$ if $\text{vec}(\mathbf{H})\neq \mathbf{0}$.
Since from (\ref{eq:phi_ort}) we have $\boldsymbol{\Phi}_{d}^{\dag}\boldsymbol{\Phi}_{u}=\mathbf{0}$, $\boldsymbol{\Phi}_{u}^{\dag} \boldsymbol{\Phi}_{d}=\mathbf{0}$, it also holds $\boldsymbol{\Phi}_{u}^{\dag}\text{vec}(\mathbf{H}_{d})=\mathbf{0}$ and  $\boldsymbol{\Phi}_{d}^{\dag}\text{vec}(\mathbf{H}_{u})=\mathbf{0}$.
Therefore, it results
\beq \label{eq:ineq5}
\begin{split}
& \mP_{\mathcal{N}+\mathcal{Q}}(\text{vec}(\mathbf{H}_{u})+ \text{vec}(\mathbf{H}_{d}))=\\
& (\mP_{\mathcal{N}}+\mP_{\mathcal{Q}})(\text{vec}(\mathbf{H}_{u})+ \text{vec}(\mathbf{H}_{d}))=\\
& \boldsymbol{\Phi}_{d} \boldsymbol{\Phi}_{d}^{\dag}\text{vec}(\mathbf{H}_{d}) +  \boldsymbol{\Phi}_{u} \boldsymbol{\Phi}_{u}^{\dag}\text{vec}(\mathbf{H}_{u})=\bd_d+\bd_{u}
\end{split}
\eeq
where we defined $\bd_d=\boldsymbol{\Phi}_{d} \boldsymbol{\Phi}_{d}^{\dag}\text{vec}(\mathbf{H}_{d})$
and $\bd_u=\boldsymbol{\Phi}_{u} \boldsymbol{\Phi}_{u}^{\dag}\text{vec}(\mathbf{H}_{u})$.
Since from (\ref{eq:orth_cond}) it holds $\bd_u^T \bd_d=0$, this implies that if (\ref{eq:ineq5}) is equal to zero, i.e.
$\bd_d=-\bd_{u}$, then we get  $\bd_u^T \bd_u=0$ and this leads to an absurdum. This implies that we have $\by:=\mP_{\mathcal{N}+\mathcal{Q}}\text{vec}(\mathbf{H})\neq \mathbf{0}$.
Then, given the non-zero vector $\by$, we distinguish the two cases: i)  $\mP_{\mathcal{M}^{\perp}} \by=\mathbf{0}$; ii)  $\mP_{\mathcal{M}^{\perp}} \by \neq \mathbf{0}$. Note that, given a filter $\mH$ to be implemented, both conditions i) and ii) can be easily checked since we know  the projection matrices $\mP_{\mathcal{N}}$, $\mP_{\mathcal{Q}}$ and  $\mP_{\mathcal{M}^{\perp}}$.\\
Let us first consider the case i).
From (\ref{eq:P_M_perp5}) it can be easily derived that  (\ref{eq:dis4}) holds with the equality i.e.
\beq \label{eq:dis}
\parallel \mathbf{P}_{\mathcal{M}^{\perp}} \text{vec}(\mathbf{H}) \parallel^2=
\parallel \mathbf{P}_{(\mathcal{N}+\mathcal{Q})^{\perp}} \text{vec}(\mathbf{H}) \parallel^2.
\eeq
 Using (\ref{eq:M_ort}) and (\ref{eq:sum1})
we get
\beq
\begin{split}
& \parallel \mP_{\mathcal{M}^{\perp}} (\text{vec}(\mathbf{H}_{u})+ \text{vec}(\mathbf{H}_{d}))\parallel^2_F  =\\
& \parallel  (\mI-\mP_{\mathcal{N}}-\mP_{\mathcal{Q}})  (\text{vec}(\mathbf{H}_{d}) + \text{vec}(\mathbf{H}_{u}) )\parallel^2_F\\
& =\parallel  (\mI-\mP_{\mathcal{N}})  \text{vec}(\mathbf{H}_{d}) +
(\mI-\mP_{\mathcal{Q}})\text{vec}(\mathbf{H}_{u}) \parallel^2_F
\end{split}
\eeq
or
\beq
\begin{split}
& \parallel \mP_{\mathcal{M}^{\perp}} (\text{vec}(\mathbf{H}_{u})+ \text{vec}(\mathbf{H}_{d}))\parallel^2_F =\\
&\parallel  (\mI-\mP_{\mathcal{N}})  \text{vec}(\mathbf{H}_{d})\parallel^2_F +\parallel
(\mI-\mP_{\mathcal{Q}})\text{vec}(\mathbf{H}_{u}) \parallel^2_F.
\end{split}
\eeq
Replacing in this last inequality $\mathbf{P}_{\mathcal{Q}}=\boldsymbol{\Phi}_{u} \boldsymbol{\Phi}_{u}^{\dag}$, $\mathbf{P}_{\mathcal{N}}=\boldsymbol{\Phi}_{d} \boldsymbol{\Phi}_{d}^{\dag}$ and (\ref{eq:M_ort}),
we can state that (\ref{eq:ineq_norm}) holds with the equality and this concludes the proof of point i) in the Proposition.
Let us now consider the case ii) where
\beq  \label{eq:P_ineq}
\mP_{\mathcal{M}^{\perp}} \by\neq \mathbf{0}.
\eeq
Then, from (\ref{eq:P_M_perp5}), one easily gets
\beq \label{eq:P_ult5}
\begin{split}
\parallel \mP_{\mathcal{M}^{\perp}} \text{vec}(\mathbf{H}) \parallel^2 &= \parallel \mP_{\mathcal{M}^{\perp}} \by \parallel^2\\& +\parallel\mP_{(\mathcal{N}+\mathcal{Q})^{\perp}}\text{vec}(\mathbf{H})\parallel^2
\end{split}
\eeq
where the first term is non zero from (\ref{eq:P_ineq}).
Therefore,  it results
\beq \label{eq:P_ult6}
\begin{split}
\parallel \mP_{\mathcal{M}^{\perp}} \text{vec}(\mathbf{H}) \parallel^2 & > \parallel \mP_{(\mathcal{N}+\mathcal{Q})^{\perp}}\text{vec}(\mathbf{H})\parallel^2
\end{split}
\eeq
and this concludes the  proof of point ii) in the Proposition.
It remains to prove point (b).
Let us assume that $\mP_{\mathcal{M}} \text{vec}(\mathbf{H})=\text{vec}(\mathbf{H})$ so that from (\ref{eq:l2}) there exists at least a solution $\ba^{\star}$ such that $f(\ba^{\star})=0$, i.e. it holds
\beq
\left[\begin{array}{ll} \bh^s\\
\bh^I
\end{array}\right]=\left[\begin{array}{ll} \boldsymbol{\Phi}_s \ba^{\star}\\
 \boldsymbol{\Phi}_I \ba^{\star}
\end{array}\right].
\eeq
Then, we get $f_s(\ba^{\star})=f_I(\ba^{\star})=f(\ba^{\star})=0$. This concludes the proof of point (b).

\bibliographystyle{IEEEbib}
\bibliography{reference}

\end{document}